\documentclass[conference]{IEEEtran}
\usepackage{cite}
\usepackage{amsmath,amssymb,amsfonts}
\usepackage{graphicx}
\usepackage{algorithmic}
\usepackage{textcomp}
\usepackage{hyperref}

% ADDED
\usepackage{cite}
\usepackage{times}
\usepackage{multicol}
\usepackage{caption}
\usepackage{subcaption}
\usepackage{bm}
\usepackage[linesnumbered,ruled,vlined]{algorithm2e}
\usepackage{pifont}   
\usepackage{orcidlink}
\usepackage{booktabs}
\usetikzlibrary{arrows.meta, positioning}

\newcommand{\Rr}{{\mathbb{R}}}

\newcommand{\Aa}{{\mathcal{A}}}

\def\dt{{\rm d}t}
\def\leq{\leqslant}
\def\geq{\geqslant}

\newtheorem{theorem}{Theorem}

\newtheorem{proposition}[theorem]{Proposition}
\newtheorem{theorem*}{Theorem}
\newtheorem{lemma*}[theorem]{Lemma}
\newtheorem{corollary*}[theorem]{Corollary}
\newtheorem{proposition*}[theorem]{Proposition}
\newtheorem{problem*}[theorem]{Problem}
\newtheorem{definition}[theorem]{Definition}
\newtheorem{hyp}{Assumption}
\newtheorem{problem}{Problem}
\newtheorem{remark}[theorem]{Remark}

\def\BibTeX{{\rm B\kern-.05em{\sc i\kern-.025em b}\kern-.08em
    T\kern-.1667em\lower.7ex\hbox{E}\kern-.125emX}}
\markboth{\journalname, VOL. XX, NO. XX, XXXX 2017}
{Author \MakeLowercase{\textit{et al.}}: Preparation of Papers for IEEE TRANSACTIONS and JOURNALS (February 2017)}
\begin{document}
\title{Game-Theoretic Coordination for Time-Critical Missions of UAV Systems}
\author{Mikayel Aramyan \orcidlink{0009-0005-4496-7250}, Anna Manucharyan \orcidlink{0009-0003-5275-8032}, Lusine Poghosyan \orcidlink{0000-0002-9535-8966}, Tigran Bakaryan \orcidlink{0000-0002-6633-4180} and Naira Hovakimyan \orcidlink{0000-0003-3850-1073}, \IEEEmembership{Fellow, IEEE}
\thanks{The research is supported by AFOSR grant \#FA9550-21-1-0411 and NASA grant \#80NSSC22M0070, and in part by  the Higher Education and Science Committee of the MESCS RA under Research Project No.~24IRF-1A001 (supporting A.M., L.P., and T.B.).}
\thanks{Mikayel Aramyan and Naira Hovakimyan Authors are with the Department of Mechanical Science \& Engineering, University of Illinois Urbana-Champaign, Urbana, IL 61801, USA (e-mail: mikayel2@illinois.edu;  nhovakim@illinois.edu).}
\thanks{Anna Manucharyan is with the Center For Scientific Innovation and Education, Yerevan, Armenia, and Akian College of Science and Engineering, American University of Armenia, Yerevan, Armenia (e-mail: anna.manucharyan@csie.am).}
\thanks{ Lusine Poghosyan and Tigran Bakaryan are with the Center for Scientific Innovation and Education, and the Institute of Mathematics of NAS RA, and the Yerevan State University, Yerevan, Armenia (e-mail: lusine@instmath.sci.am; tigran.bakaryan@instmath.sci.am).}}

\maketitle

\begin{abstract}
Coordinated missions involving Unmanned Aerial Vehicles (UAVs) in dynamic environments pose significant challenges in maintaining both coordination and agility. In this paper, relying on the cooperative path following framework and using a game-theoretic formulation, we introduce a novel and scalable approach in which each UAV acts autonomously in different mission conditions. This formulation naturally accommodates heterogeneous and time-varying objectives across the system.  
In our setting, each UAV optimizes a cost function that incorporates temporal and mission-specific constraints. The optimization is performed within a one-dimensional domain, significantly reducing the computational cost and enabling real-time application to complex and dynamic scenarios. The framework is distributed in structure, enabling global, system-wide coordination (a Nash equilibrium) by using only local information.
For ideal systems, we prove the existence and  the Nash equilibrium exhibits exponential convergence. Furthermore,  we invoke model predictive control (MPC) for non-ideal scenarios. In particular, we propose a discrete-time optimization approach that tackles path-following errors and communication failures, ensuring reliable and agile performance in dynamic and uncertain environments. Simulation results demonstrate the effectiveness and agility of the approach in ensuring successful mission execution across diverse realistic scenarios. 
\end{abstract}

\begin{IEEEkeywords}
Autonomous aerial vehicles, Distributed control, Game theory, Multi-agent systems, Nash equilibrium, Euler-Lagrange equation.
\end{IEEEkeywords}

\section{Introduction}
Recent advancements in technology have significantly expanded the capabilities of Unmanned Aerial Vehicle (UAV) systems, enabling them to perform a wide range of tasks \cite{mohsan2023unmanned, UAVApp}. Particularly, cooperative UAV systems have been extensively deployed across diverse domains, including civilian applications such as surveillance, environmental monitoring, and air traffic management, as well as military operations like swarm-based attack-defense scenarios. Notable examples include cooperative forest fire monitoring and suppression \cite{pham2018distributed, sujit2007cooperative}, surveillance of multiple moving targets \cite{gu2018multiple}, to name a few. Despite their broad range of applications, cooperative UAV systems face operational limitations and challenges like battery endurance, payload carrying capability, flight autonomy, path planning, path following, and achieving reliable cooperation.

Effective cooperation among a system of UAVs is crucial for accomplishing complex tasks that exceed the capabilities of a single UAV. Applications such as simultaneous target tracking \cite{swarm-track}, formation flying \cite{swar-flock}, and large-scale area mapping and monitoring \cite{UAV-App} heavily rely on the coordinated efforts of multiple UAVs. Achieving reliable cooperation, however, is inherently challenging due to factors such as communication delays, limited computational resources, variations in UAV dynamics, disturbances, and unpredictable events. These challenges highlight the need for efficient, scalable and adaptive strategies to ensure system-wide collaboration in dynamic and uncertain environments. 
This paper addresses the challenge of time-critical cooperation by a novel game-theoretic method.

We develop a scalable distributed coordination framework for multi-UAV systems operating in realistic and dynamic environments. Since communication networks and disturbances are time-varying and uncertain, complete global information is generally unavailable. As a result, each UAV must retain local autonomy and adapt its objective online based on its own state and interactions with neighbors.
This naturally motivates a game-theoretic formulation, which accommodates heterogeneous objectives, such as situations where some UAVs prioritize collision avoidance while others focus on temporal coordination (see Section~\ref{subsec-collison-avoid}). Moreover, the proposed approach is inherently distributed and scalable: each UAV solves a low-dimensional local problem whose complexity does not increase with the size of the network, as neighbor information enters only through the cost function.

Following the framework of cooperative path-following developed in \cite{Antonio1, Reza1, kaminer2017time, Others}, we formulate the time-critical coordination problem as a "time synchronization" problem (see Section \ref{sec:prelim} for more details). This approach is inspired by the pioneering work of Leslie Lamport on time synchronization \cite{time-clock}. The key concept underpinning this framework is the decoupling of space and time in the general problem formulation, thereby reducing the multi-dimensional coordination problem to a one-dimensional consensus problem. This significantly reduces computational complexity and communication overhead, facilitating efficient real-time implementation. In  \cite{Reza1,  Coo-path-2013, 7065327, kaminer2017time, KH-23} the consensus is achieved using a PI controller. The exponential stability of time coordination was established for networks connected in an integral sense. Despite the advantages of PI-based methods \cite{kaminer2017time}, they heavily rely on apriori mission planning, which weakens its adaptability and applicability in dynamic environments, particularly in satisfying operational requirements and time-varying objectives of the mission as they change.  To overcome these limitations, in this paper, we reformulate the coordination problem within a game-theoretic framework, employing the virtual time concept (see \cite{kaminer2017time}) as a consensus parameter. In our setting, UAVs engage strategically to reach agreement on the coordination parameter. Compared to previous PI-based methods \cite{time-clock}, our approach provides greater generality and flexibility in accommodating operational constraints and dynamically evolving mission objectives. In particular, explicit constraints such as UAV speed and acceleration limits can be directly incorporated into the optimization problem, ensuring feasibility, unlike in PI-based methods, \cite{kaminer2017time}, where feasibility is more difficult to guarantee. Moreover, the optimization problem also enables the direct incorporation of complex, time-varying mission specifications—such as energy efficiency and collision avoidance (see Section \ref{remark-2})—into the optimization framework, further enhancing the adaptability and practicality of the proposed solution. Furthermore, our approach is distributed by its nature; hence, it maintains low per‐agent computational and communication overhead, ensuring scalable and robust coordination even in large‐scale, UAV networks (see Table~\ref{tab:data} in Section \ref{sec:sim}). Hence, this paper proposes a computationally efficient, scalable, and agile coordinated path-following method for UAV systems. A qualitative comparison between optimization-based and PI-based coordination methods is summarized in Table~\ref{tab:comparison}.

\textcolor{blue}{
\begin{table}[t]
\centering
\caption{Comparison between optimization-based (MPC) and PI-based coordination methods.}
\label{tab:comparison}
\scriptsize
\setlength{\tabcolsep}{3pt}
\renewcommand{\arraystretch}{1.1}
\begin{tabular}{lcc}
\toprule
\textbf{Feature} & \textbf{Optimization-based} & \textbf{PI-based} \\
\midrule
 Path-following error & Yes &  Yes\\
Explicit constraint handling & Straightforward & Depend on parameters choice \\
Operational requirements  & Yes & No \\
Formal stability guarantees & Guaranteed & Depend on parameters choice
 \\
Computational requirement &  IPOPT solver  & None
 \\
Communication load &  Vector $\mathcal{N}_i\times K$  & Vector $\mathcal{N}_i$ 
\\
Ease of implementation &  \checkmark & \checkmark \\
\bottomrule
\end{tabular}
\label{comparison}
\end{table}}
Figure~\ref{block:Diag} illustrates the architecture of the proposed solution. Unlike prior works \cite{Antonio1, Reza1, kaminer2017time} that employed a PI control law without addressing optimality, our approach introduces an Optimal Temporal Coordination block based on a game-theoretic formulation, where UAVs coordinate through constrained optimization, ensuring both feasibility and responsiveness to evolving mission demands.
 The main contributions of this paper are summarized as follows. We propose a novel game-based scalable coordination algorithm, where the Nash equilibrium of the strategic interactions among agents ensures system-wide coordination and successful task execution.
 We prove the existence of a Nash equilibrium and establish the exponential convergence of the solution under ideal conditions. 
  We develop an MPC-based algorithm to handle realistic constraints, including path-following errors, communication failures, and dynamic environmental conditions.
  
In contrast to classical Nash equilibrium existence results (see, for example, \cite{Bashar-1}, \cite{Pavel2012GameTheory}, and \cite{NE-Review}), we establish the existence of a continuous-time, infinite-horizon Nash equilibrium in an infinite-dimensional action space with constraints. A related result—establishing the existence of a Nash equilibrium with bounded control inputs and stability—was previously shown in \cite{NE-Non} for the discrete-time case using Lyapunov stability techniques.  In this paper, we make a significant progress by proving both the existence and exponential stability of a continuous-time, infinite-horizon Nash equilibrium with constraints, based on the analysis of the associated Euler–Lagrange equations—an approach that, to the best of our knowledge, is used for the first time in this setting. In particular, our coordination problem leads to double integrator dynamics, resulting in a fourth-order system of Euler–Lagrange differential equations, which we study by deriving its explicit solution. Furthermore, by leveraging this explicit form and applying appropriate transversality conditions, we establish the exponential stability of the equilibrium.

Regarding Nash equilibrium seeking algorithms, there are  existing results assuming  a fixed communication network topology (see, for example \cite{Shamma-g, Dusan_NE_1, Bashar-NE, Jeff-num-2017}). These methods often rely on standard consensus protocols or gradient-based dynamics in static environments, limiting their applicability in dynamic or time-varying settings. 
Recent works have begun to relax this assumption, considering time-varying network structures. For instance, the algorithms in \cite{Larca-Pavel-2017, NE-seeking-2021} explore convergence under dynamically evolving graphs, relying on conditions like joint or integral connectivity. However, these methods typically employ first-order dynamics and gradient-based updates.
 In  this paper, leveraging  MPC, 
 we propose an algorithm (see Section \ref{sec:sim}) that converges to  Nash equilibrium in the case of time-varying network, even in the presence of communication failures. Furthermore, the algorithm takes into account the path following errors, making it robust to environmental disturbances such as wind (see Section \ref{sec:sim1} for simulations). Nevertheless, each MPC step solves a simple optimization problem with the UAV’s virtual time as the only decision variable, avoiding optimization over high-dimensional UAV states and significantly reducing the computational cost. In the cooperative path-following framework, once the spatial trajectory is fixed during mission planning, the coordination layer operates solely on the virtual-time variable with simple upper and lower bounds, while neighbor information and path-following errors enter only as constants (see Problem~\ref{Discrete}). As a result, the method remains adaptive, robust, and scalable—its computational time is unaffected by the number of neighboring UAVs (see Table~\ref{tab:data})—making it suitable for real-time implementation. All computation times reported for each simulation scenario (see Section \ref{sec:sim}) are on the order of $10^{-2}$, further evidencing the effectiveness of the proposed method.

The remainder of this paper is structured as follows: Section \ref{sec:prelim} introduces the necessary notations and provides an overview of the preliminaries, including path planning, path following, and coordination. Furthermore, in Section \ref{sec:prelim}, we introduce an infinite horizon game problem with a discount factor. In Section \ref{sec:main},  by leveraging the system of Euler-Lagrange equations (a coupled fourth-order differential equations), we prove the existence of a Nash equilibrium for the corresponding unconstrained problem under ideal conditions. Moreover, we derive an explicit expression for the Nash equilibrium and prove that
the equilibrium
trajectory exhibits exponential convergence; see Proposition \ref{prop-exp-first}. Eventually, examining the coefficients of the explicit solution, we prove that there exists a discount rate such that the solution to the unconstrained problem is also solution to the constrained one; see   Theorem \ref{theorem-exp-stab}. Section \ref{sec:sim} outlines the general algorithm derived from the theoretical results, and Section \ref{sec:sim1} validates the algorithm's reliability and performance through simulations in challenging and realistic scenarios. The video footage showcasing the simulations and the flight experiments is available at the following \href{https://www.youtube.com/watch?v=yaI_HdUjJj4&t=77s}{link}.

% ---------------------------------------------------------
\section{Preliminaries}\label{sec:prelim}

We consider a class of time-critical cooperative missions that can be formulated as  ``time synchronization'' problems (see, for example, \cite{time-clock, kaminer2017time}). 
We assume that for a given mission a path planner generates a desired and feasible trajectory (possibly optimal for the mission) for each UAV.  Next, we assume that each UAV is equipped with a robust low-level controller for tracking purposes. The coordination task is then reduced to a ``time synchronization'' problem. Although the proposed framework is general and can be applied to various heterogeneous UAV systems (multirotor, fixed-wing, VTOL, etc.), 
we focus on systems consisting of quadrotors for simplicity. 
Throughout the remainder of the paper, the term \emph{UAV} refers to a quadrotor unless stated otherwise.
 Section~\ref{Path-Planning-and-Following} discusses the constraints associated with offline trajectory-generation, while Section~\ref{Time-Coordination-for-Simultaneous-Arrival} discusses constraints on coordination parameters and defines the objectives underlying the time-coordination formulation.

\begin{figure}
    \centering
    \includegraphics[width=1\linewidth]{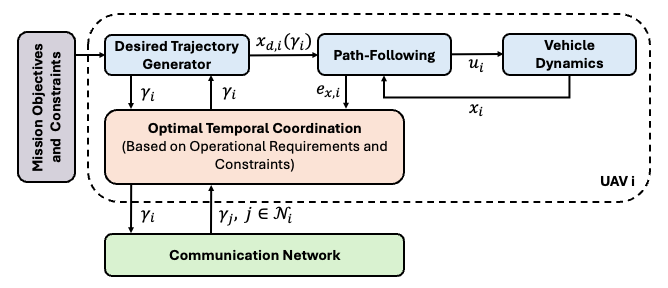}
    \caption{ Game-theoretic cooperative path-following approach block diagram.}
    \label{block:Diag}
\end{figure}

\subsection{Path Planning and Following}\label{Path-Planning-and-Following}

Consider a system of $N$  UAVs involved in the mission, $N \in \mathbb{N}$. For each UAV,
 the desired trajectory is given as a function $x_{d, i}:[0, t_{d, i}^*] \rightarrow \mathbb{R}^3$, $t_{d, i}^* > 0$. For all $t \in [0, t_{d, i}^*]$, the trajectories satisfy the constraints  defined by the $i^{\rm th}$ UAV's minimum and maximum linear velocity limits, and maximum linear acceleration limits \cite{kaminer2017time,cichella20133d}:
\begin{equation}\label{def-vel-constaints}
0\leq v^i_{\min}<v^i_{d, \min } \leq\left|\left|\dot{x}_{d, i}(t)\right|\right|\leq v^i_{d, \max }<v^i_{\max},
\end{equation}
where $v^i_{\min}$ and $v^i_{\max}$ are the $i^{\rm th}$ vehicle's possible minimum and maximum speeds, respectively, while, $v^i_{d,\min}$ and $v^i_{d,\max}$ are the vehicle's minimum and maximum speeds for the given mission. Similarly, 
\begin{equation}\label{def-acc-constaints}
\left|\left|\ddot{x}_{d, i}(t)\right|\right|\leq a^i_{d,\max} < a^i_{\max},
\end{equation}
with $a^i_{\max}$ being the $i^{\rm th}$ UAV's possible maximum acceleration and $a^i_{d,\max}$ being the maximum acceleration for the mission.

Time-critical mission specifications are encoded into the desired trajectories of the UAVs by the planner.  By properly defining these trajectories, it is possible to design missions with varying characteristics, such as simultaneous arrival and/or sequential auto-landing. The analysis of such systems is similar, and hence, here, we consider only the simultaneous arrival case. All UAVs are expected to complete their assigned paths under ideal conditions sharing a common nominal predefined final arrival time $t_d^* \in \mathbb{R}^{+}$:
$
t_{d, i}^*=t_d^*$, $i=1, \ldots, N.
$
Next, we assume that the trajectories  maintain spatial separation throughout the mission:
\begin{equation}\label{def-sp-sep}
\min _{\substack{i, j=1, \ldots, n \\ i \neq j}}||x_{d, i}(t)-x_{d, j}(t)||^2 \geq E^2>0, \, \text{for all } t\geq 0,
\end{equation}
where $E>0$ denotes the prescribed minimum allowable distance between any two UAV trajectories.

\textit{Remark 1.} In Section~\ref{remark-2},  we introduce an additional collision-avoidance term within the optimization framework. This term explicitly enforces safe separation between UAVs, thereby eliminating the need for the spatial separation constraints in~\eqref{def-sp-sep}. Consequently, the proposed formulation allows UAVs to follow intersecting or closely aligned trajectories while ensuring collision-free operation.

Various approaches exist for trajectory generation, each for specific objectives;   e.g., optimal control-based methods (minimizing energy consumption or path length), waypoint-based trajectory generation (connects waypoints using smooth polynomials), and minimum-snap trajectory generation algorithm (see \cite{Beard2012, 4310229,  Dubins1957OnCO, 5980409}).

We assume that each UAV is equipped with a path-following controller that ensures that the UAV follows its desired trajectory, \( x_{d,i}(t) \), or the updated trajectory based on coordination.  Examples include PID controllers \cite{salih2010flight}, geometric controllers \cite{5717652}, and adaptive path-follower \cite{7963104}, to name a few.

\subsection{Time Coordination for Simultaneous Arrival}\label{Time-Coordination-for-Simultaneous-Arrival}

To formulate the simultaneous arrival problem, we follow \cite{7065327, kaminer2017time}.
Let $\gamma_i:\mathbb{R}^+\rightarrow[0,t_d^*]$, $i=1,\dots,N$, map the clock time $t$ to the mission time $t_{d,i}=\gamma_i(t)$, where $t_{d,i}$ parameterizes a predefined desired trajectory.
The function $\gamma_i$ is referred to as the \emph{virtual time} of the $i$th UAV and represents an internal coordination variable that governs the progression of the UAV along its path. Unlike real time, $\gamma_i$ can be adjusted dynamically, enabling reparameterization of the desired trajectory as
\begin{equation}\label{def-reparamit}
x_{\gamma,i}(t):=x_{d,i}(\gamma_i(t)),
\end{equation}
where $x_{\gamma,i}$ denotes the virtual target to be tracked by the UAV.
This formulation decouples temporal evolution from spatial motion, allowing coordination through appropriate regulation of the virtual time. Specifically, When $\dot{\gamma}_{i}(t) = 1$, the UAV travels at the desired pace. When $\dot{\gamma}_{i}(t) > 1$, the UAV moves faster than the desired pace.
 When $\dot{\gamma}_{i}(t) < 1$, the UAV moves slower than the desired pace.
Thus, $\gamma_i$ serves the role of a consensus parameter in our problem. 
Based on the physical limitations and mission requirements of the UAVs, as defined in \eqref{def-vel-constaints} and \eqref{def-acc-constaints}, we derive general bounds for the consensus parameter of each UAV. Specifically, from \eqref{def-reparamit}, we have
 \begin{equation}\label{eq-deriv-x_id_beta}
\begin{split}
\dot{x}_{\gamma, i}(t)&=\dot{x}_{d, i}\left(\gamma_i(t)\right)\dot{\gamma}_i(t),\\
\ddot{x}_{\gamma, i}(t)&=\ddot{x}_{d, i}\left(\gamma_i(t)\right)\dot{\gamma}_i(t)^2+\dot{x}_{d, i}\left(\gamma_i(t)\right)\ddot{\gamma}_i(t).
\end{split}
\end{equation}
The first equation along with \eqref{def-vel-constaints} implies that the minimum and the maximum values of all admissible parameters $\gamma_i$ must satisfy the following constraint (for more details see \cite{7065327}):
\begin{equation}\label{def-constraits-beta-deriv}
\frac{v^i_{ \min }}{v^i_{d, \min }}	 \leq \dot{\gamma}^i_{\min } \leq \dot{\gamma}_i(t)\leq \dot{\gamma}^i_{ \max }\leq \frac{v^i_{ \max }}{v^i_{d, \max }}.
\end{equation}
These inequalities ensure that the UAVs maintain forward motion throughout the mission (due to $\dot{\gamma}_{i}(t) \geq 0$ condition).
On the other hand, by \eqref{def-acc-constaints} and the second equation of \eqref{eq-deriv-x_id_beta}  it follows that $\dot{\gamma}^i_{ \max }=\max_{t}\{\dot{\gamma}_{i}(t)\}$ and  $\ddot{\gamma}^i_{ \max }=\max_{t}\{|\ddot{\gamma}_{i}(t)|\}$ should satisfy  

\begin{equation}\label{def-constraits-beta-double-deriv}
\ddot{\gamma}^i_{ \max } v^i_{d, \max }+(\dot{\gamma}^i_{ \max })^2 a^i_{d, \max } \leq a^i_{\max }.
\end{equation}
Note that since $v^i_{d, \max }<v^i_{\max}$ and $a^i_{d,\max} < a^i_{\max}$, there exist   $\gamma_i$ parameters such that   inequalities in \eqref{def-constraits-beta-deriv} and \eqref{def-constraits-beta-double-deriv} are satisfied. For more details on above derivation see \cite{7065327}.

The \textbf{time coordination} of the system is achieved when the consensus  parameters are synchronized (see  \cite{kaminer2017time} and references within); that is,
\begin{equation}\label{def-cooperation1}
\gamma_i(t)-\gamma_j(t)=0, \quad \text{for all } i, j \in \{1, 2, \dots, N\}.
\end{equation}
This condition ensures that all UAVs reach their respective goal positions simultaneously.
Furthermore, to ensure that the UAVs maintain a predefined \textbf{desired speed} profile, the derivatives of the virtual times should satisfy:
\begin{equation}\label{def-cooperation2}
\dot{\gamma}_{i}(t)-1=0, \quad \text{for all } i \in \{1, 2, \dots, N\}.
\end{equation}
\vspace{-1cm}
\subsection{Problem Formulation}\label{sec:prob}
Next, we introduce a game-theoretic approach that facilitates time-critical cooperative mission execution. For the theoretical analysis, we make the following assumptions regarding the UAV system and its dynamics:

\begin{hyp}\label{hyp:1}
    Communication network of UAVs is fully connected; that is, $\mathcal{N}_i=\{1, \ldots, N\}$ for all $i=1, \ldots, N$, where $\mathcal{N}_i$ is set of UAVs with which $i^{th}$ UAV can exchange information.
\end{hyp}
\begin{hyp}\label{hyp:2}
    The motion of each UAV is governed by ideal path-following kinematics,
    meaning that the UAV perfectly tracks its assigned geometric path without deviation.
\end{hyp}
\begin{remark}\label{remark-no-path}
Assumption~\ref{hyp:2} implies that the desired trajectory reparameterizations through the corresponding virtual times $\gamma_i$, together with the linear constraints in \eqref{def-constraits-beta-deriv} and \eqref{def-constraits-beta-double-deriv}, still yield feasible trajectories. Hence, no path-following error occurs under this assumption.
\end{remark}

While these assumptions simplify the theoretical analysis (see Section \ref{sec:ex}), they do not limit the practical applicability of the proposed approach. In Section \ref{sec:sim}, we introduce an algorithm that incorporates path-following errors and communication failures, enabling application in real-world scenarios. As shown in \cite{N1, N2}, by using cascaded inner-loop outer-loop structure for flight control applications, the uncertainties in system dynamics can be handled by a variety of robust control methods.

We begin by introducing the admissible set of the consensus parameter (virtual time) for each UAV. Due to unexpected events, UAVs may begin their missions at different time instants, 
which corresponds to different initial virtual times, i.e., $\gamma_i(0) = \gamma_i^0$ for $i = 1, \dots, N$. Along with the inequalities in \eqref{def-constraits-beta-deriv}, \eqref{def-constraits-beta-double-deriv}, we define the admissible sets:
\begin{equation*}
\mathcal{A}^{0}_i := \Big\{ \gamma_i \in H^2_{w,\alpha}((0, \infty)): 
	\gamma_i(0) = \gamma^0_i\geq 0, \,\dot{\gamma}_i(0) = 1
\Big\},
\end{equation*}
and 
\begin{equation*}
\mathcal{A}^{2,\alpha}_i := \Big\{ \gamma_i \in \mathcal{A}^{0}_i: 
	\dot{\gamma}_i \geq 0, \,   
\|\dot{\gamma}_i\|_{L^\infty} \leq V^i_1, \,  \|\ddot{\gamma}_i\|_{L^\infty} \leq V^i_2
\Big\},
\end{equation*}
where  $H^1_{w,\alpha}((0, \infty))=\{g\in H^2_{loc} ((0, \infty)): \int_{0}^{\infty} e^{-\alpha t} (g^2+\dot{g}^2+\ddot{g}^2) \dt<\infty\}$ is weighted Sobolev space (for a detailed discussion of Sobolev and weighted Sobolev spaces see for example \cite{ForSobolev1}, \cite{ForSobolev2} and \cite{ForSobolev3}) with weight $e^{-\alpha t}$.

\begin{problem}\label{prob-first} Consider a system of $N$ UAVs. Each UAV (agent)  over $\gamma_i\in\Aa^{2,\alpha}_i$ seeks to minimize
	\begin{equation}\label{def-cost}
		\begin{split}
J_i(\gamma_i,\gamma_{-i})=\int_{0}^{\infty}&e^{-\alpha t}\Big(w_1\left(\dot{\gamma}_i-1\right)^2\\&+\frac{w_2}{|\mathcal{N}_i|}\sum_{j \in \mathcal{N}_{i}}(\gamma_i-\gamma_{j})^2 +w_3\ddot{\gamma}_i^2\Big)\dt,
		\end{split}
	\end{equation}
where $\gamma_{-i}=(\gamma_1,\dots,\gamma_{i-1},\gamma_{i+1},\dots,\gamma_N)$, $\mathcal{N}_{i}$ is set of neighbors of $i^{th}$ UAV and $w_1, w_2, w_3 \in \mathbb{R}_0^+$ are nonnegative constants satisfying 
$w_1 + w_2 + w_3 = 1$, which determine the relative importance of the corresponding penalty terms.
\end{problem}

The cost function in \eqref{def-cost} corresponds to simultaneous arrival and  maintaining a desired mission pace with the following terms.
     \textbf{Pace penalty:} $(\dot{\gamma}_i - 1)^2$ penalizes deviations from the desired pace, and the parameter $w_1$ determines its relative importance.
     \textbf{Coordination penalty:} $(\gamma_i - \gamma_j)^2$ penalizes discrepancies with neighboring agents, and the parameter $w_2$ determines its relative importance.
     \textbf{Control effort penalty:} $(\ddot{\gamma}_i)^2$ penalizes excessive control inputs, and the parameter $w_3$ determines its relative importance.
 
We refer to the solution of Problem \ref{prob-first} as Nash equilibrium. 
\begin{definition} The vector function $\gamma^*=(\gamma^*_1,\dots,\gamma^*_N)$ is  Nash equilibrium of Problem \ref{prob-first}, if for all $ \gamma_{i}\in \Aa^{2,\alpha}_i$
\begin{equation}
J_i(\gamma^*_i,\gamma^*_{-i})\leq J_i(\gamma_i,\gamma^*_{-i}).
\end{equation}
\end{definition}

%---------------------------------------------------------------------
\section{Main Approach}\label{sec:main}
In this section,  we prove the existence and exponential stability of the solution to Problem \ref{prob-first}. Furthermore, we introduce an MPC based algorithm that approximates the solution of Problem \ref{prob-first}.
\vspace{-0.55cm}
\subsection{Existence of Solution and Exponential Stability}\label{sec:ex}
 Under ideal system conditions, we isolate and analyze the core features of the game-theoretic approach. More precisely, to establish the existence and exponential stability of the solution to Problem \ref{prob-first}, we first prove the existence and exponential stability of the solution to the corresponding unconstrained problem. Then, we determine the parameter $\alpha$ (discount rate) such that the solution to the unconstrained problem is also a solution to the constrained problem. We then conclude that, for the chosen $\alpha$, Problem \ref{prob-first} has a solution that exhibits exponential convergence with respect to the equilibrium trajectory.

 The unconstrained problem can be formulated as follows: 

\begin{problem}\label{prob-exp-stab} Consider a system of $N$ UAVs. Each UAV (agent) seeks to minimize
	\begin{equation}\label{def-cost-0}
I^\alpha_{\bm{\gamma}}[\gamma_i]=\int_{0}^{\infty}e^{-\alpha t}\Big(w_1\dot{\gamma}_i^2+\tfrac{w_2}{N}\sum_{j=1}^{N}(\gamma_i-\gamma_{j})^2 +w_3\ddot{\gamma}_i^2\Big)\dt
	\end{equation}
    over $\gamma_i\in\mathcal{B}^{0}_i:=\{\gamma_i\in H^2_{w,\alpha}((0, \infty)):(\gamma_i
    +t)\in\mathcal{A}^{0}_i \}$.
\end{problem}

Note that in $\gamma_i(t)$ the argument $t$ refers to the \emph{physical time} of the UAV. This is distinct from the variable $t$ used inside the optimization problem~\eqref{def-cost-0}, which serves as a \emph{dummy variable of integration} in the optimization.
The following result proves the existence and exponential stability of the solution to Problem \ref{prob-exp-stab} and provides an explicit expression. 
\begin{proposition}
    \label{prop-exp-first}
    Let $\alpha,w_3>0$, $w_1, w_2\geq 0$, and $w_1+w_2+w_3=1$ and suppose that Assumptions \ref{hyp:1}-\ref{hyp:2} hold. Then, there exists a unique $\bm{\gamma}^*=(\gamma_1^*,\dots,\gamma_N^*)\in \prod_{j=1}^{N}\mathcal{B}^{0,\alpha}_j$ solving Problem \ref{prob-exp-stab}. Moreover, letting $W := w_1^2 - 4w_2 w_3$ and 
$z_i^{r}(t) := H_{1i}^r + H_{3i}^r e^{\mu_3 t}$ 
(for $r \in \{0, +, -\}$), 
the solution takes the following explicit form:
\begin{equation}\label{eq-sol-General-3}
\begin{split}
\gamma_i^*(t) = z_i^{0}(t)+C^0_{1i} e^{\mu_1^0 t}  + C^0_{2i}t e^{\mu^0 t},
\end{split}
\end{equation}
when $W=0$; 
\begin{equation}\label{eq-sol-General-2}
\begin{split}
\gamma_i^*(t) = z_i^{+}(t)+C^+_{1i} e^{\mu^+_1 t}  + C^+_{2i} e^{\mu^+_2 t},
\end{split}
\end{equation}
when $W>0$; and
\begin{align}
    \label{eq-explicit-solution}
        \gamma_i^*(t)=z_i^{-}(t)+e^{\mu^-_1 t} (C^-_{i1} \cos(\nu_1 t) + C^-_{i2} \sin(\nu_1 t)),
    \end{align}
    when $W<0$. The constants $\mu_3$, $\mu_1^r$, $\mu_2^+$, $H_{1i}^r$, $H_{2i}^r$, 
$C_{i1}^r$, $C_{i2}^r$, and~$\nu_1$ (for $r \in \{0, +, -\}$) 
depend only on $\alpha$, $w_1$, $w_2$, $w_3$, and the initial conditions, and satisfy $\mu_3, \mu_1^r, \mu_2^+ < 0$.
   \end{proposition}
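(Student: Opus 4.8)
The plan is to characterize the Nash equilibrium through the coupled first-order optimality (Euler--Lagrange) conditions and then integrate them in closed form. Fixing $\gamma_{-i}$, the functional $\gamma_i\mapsto I^\alpha_{\bm{\gamma}}[\gamma_i]$ is quadratic and depends on $\ddot\gamma_i$, so its stationarity condition is the fourth-order Euler--Poisson equation
\begin{equation*}
w_3\gamma_i^{(4)}-2\alpha w_3\dddot\gamma_i+(\alpha^2 w_3-w_1)\ddot\gamma_i+\alpha w_1\dot\gamma_i+w_2(\gamma_i-\bar\gamma)=0,
\end{equation*}
where $\bar\gamma:=\tfrac1N\sum_{j}\gamma_j$, so that the coupling enters only through the averaging (centering) operator $M=I-\tfrac1N\mathbf{1}\mathbf{1}^{\top}$. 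I would first note that this is an exact potential game, with potential $\Phi[\bm{\gamma}]=\int_0^\infty e^{-\alpha t}\sum_i\big(w_1\dot\gamma_i^2+w_3\ddot\gamma_i^2+\tfrac{w_2}{2N}\sum_j(\gamma_i-\gamma_j)^2\big)\dt$, whose Gateaux derivative in $\gamma_i$ reproduces the equation above. Since $\Phi$ is convex and, thanks to $w_3>0$ (the Legendre condition), strictly convex in the highest derivative, a minimizer of $\Phi$ over the closed affine set $\prod_j\mathcal{B}^0_j$ is the unique Nash equilibrium: existence follows by the direct method and uniqueness by strict convexity, while the explicit form is obtained by solving the coupled system.

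To integrate the system I would diagonalize $M$ into the consensus direction $\mathbf{1}$ and its orthogonal complement, i.e.\ split $\gamma_i=\bar\gamma+(\gamma_i-\bar\gamma)$. Summing the equations, the average $\bar\gamma$ obeys the uncoupled equation whose characteristic polynomial factors as $\lambda(\lambda-\alpha)(w_3\lambda^2-\alpha w_3\lambda-w_1)$, while each deviation $z_i:=\gamma_i-\bar\gamma$ satisfies the same operator plus the zeroth-order term $w_2 z_i$, with characteristic polynomial $w_3\lambda^4-2\alpha w_3\lambda^3+(\alpha^2w_3-w_1)\lambda^2+\alpha w_1\lambda+w_2$. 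The key algebraic step is the substitution $u=\lambda^2-\alpha\lambda$, under which this quartic collapses to the quadratic $w_3u^2-w_1u+w_2=0$; its discriminant is exactly $W=w_1^2-4w_2w_3$, so the three cases $W>0$, $W=0$, $W<0$ produce the real-distinct, real-double, and complex root structures of \eqref{eq-sol-General-2}, \eqref{eq-sol-General-3}, and \eqref{eq-explicit-solution}. For each admissible value of $u$ I would recover $\lambda$ from $\lambda^2-\alpha\lambda-u=0$, i.e.\ $\lambda=\tfrac\alpha2\pm\tfrac12\sqrt{\alpha^2+4u}$.

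Next I would impose the infinite-horizon (transversality) conditions encoded in $H^2_{w,\alpha}$: a mode $e^{\lambda t}$ is admissible if and only if $e^{-\alpha t}|e^{\lambda t}|^2$ is integrable, i.e.\ $\mathrm{Re}(\lambda)<\alpha/2$. Because the two $\lambda$-roots sharing a common $u$ sum to $\alpha$, exactly one of each pair is admissible; this discards precisely half of the characteristic roots and leaves $2N$ free constants --- two from the average ($\lambda=0$ and $\mu_3=\tfrac\alpha2-\tfrac1{2w_3}\sqrt{\alpha^2w_3^2+4w_1w_3}$) and two per deviation direction. I would then fix these constants from the $2N$ initial data $\gamma_i(0)=\gamma_i^0$, $\dot\gamma_i(0)=0$ (in the shifted variable of Problem~\ref{prob-exp-stab}), which reduces everything to a linear system that must be shown nonsingular.

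Finally, for exponential stability I would verify that every retained exponent has strictly negative real part. Using $w_1\geq 0$, the admissible roots $u=\tfrac{w_1\pm\sqrt W}{2w_3}$ satisfy $\mathrm{Re}(u)\geq 0$; substituting into $\lambda=\tfrac\alpha2-\tfrac12\sqrt{\alpha^2+4u}$ and checking that $\mathrm{Re}\sqrt{\alpha^2+4u}>\alpha$ shows each retained deviation exponent, as well as $\mu_3$, is negative, so the deviations decay while $\bar\gamma$ tends to a constant --- i.e.\ consensus at an exponential rate. I expect the main obstacle to be the well-posedness of this half-line boundary value problem: proving that the integrability conditions select exactly the right modes and that, together with the initial data, the resulting coefficient system is invertible, together with confirming the negativity of $\mathrm{Re}(\lambda)$ in the complex case $W<0$, where the real part must be extracted carefully from a complex square root.
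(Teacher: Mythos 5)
Your proposal is correct and reaches the same explicit solution as the paper, but by a genuinely different route. The paper assumes an equilibrium, derives each player's Euler--Lagrange equation together with transversality conditions, decouples the system through the pairwise differences $y_{ik}=\gamma_i^*-\gamma_k^*$, and then reconstructs $\gamma_i^*$ by feeding $\sum_j(\gamma_i^*-\gamma_j^*)$ back in as a forcing term and solving a non-homogeneous fourth-order ODE by undetermined coefficients. You instead (i) get existence and uniqueness up front from the exact-potential structure ($\Phi$ is convex, and strictly so on the affine constraint set because $w_3>0$ and $\gamma_i(0),\dot\gamma_i(0)$ are pinned), and (ii) decouple via the average/deviation splitting $\gamma_i=\bar\gamma+z_i$, which block-diagonalizes the coefficient-matching into trivial $2\times 2$ systems --- precisely the invertibility issue you flag. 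Your substitution $u=\lambda^2-\alpha\lambda$ is algebraically the same reduction as the paper's Ferrari step (the shift $\lambda=x+\alpha/2$ yields a biquadratic), and your integrability criterion $\mathrm{Re}\,\lambda<\alpha/2$ discards exactly the modes rejected by the paper's transversality conditions, since the two roots sharing a value of $u$ sum to $\alpha$. Your route buys two things: it exposes that $\dot{\bar\gamma}(0)=0$ forces $\bar\gamma\equiv\bar\gamma^0$, so the paper's particular solution is just $z_i$ itself and the $e^{\mu_3 t}$ coefficient in fact vanishes (a simplification hidden in the formulas for $H^r_{3i}$); and it yields uniqueness of the equilibrium profile from strict convexity rather than from explicit solvability of the EL system. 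Two details to add: coercivity of $\Phi$ on the weighted space needs the Hardy-type bound $\int_0^\infty e^{-\alpha t}(\gamma-\gamma(0))^2\,\dt\le \tfrac{4}{\alpha^2}\int_0^\infty e^{-\alpha t}\dot\gamma^2\,\dt$ to control the zeroth-order term, and you should state explicitly that for a convex potential game the Nash equilibria coincide with the minimizers of $\Phi$ in both directions, so the direct method delivers the equilibrium and not merely a social optimum.
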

    \begin{proof}
    Here, we present the general idea of the proof; for more details see Appendix \ref{appen:A}.
Suppose that  $\bm{\gamma}^*=(\gamma_1^*,\dots,\gamma_N^*)\in \prod_{j=1}^{N}\mathcal{B}^{0}_j$ is a solution to  Problem \ref{prob-exp-stab}. Then, by the  definition of Nash equilibrium and the convexity of the integrand of \eqref{def-cost-0} it follows that $\gamma_i^*$ is the unique minimizer of the following optimization problem
       \begin{equation}\label{eq-min-prob}
\begin{split}
    I^\alpha_{\bm{\gamma}^*}[\gamma_i^*] &= \min_{\gamma_i \in \mathcal{B}^{0}_i} I^\alpha_{\bm{\gamma}^*}[\gamma_i].
\end{split}
\end{equation}
        To examine the behavior of $\gamma_i^*$,   $i=1,\dots,N$,  at infinity, we consider Euler-Lagrange equations of \eqref{eq-min-prob} 
\begin{equation}\label{eq-5-poly-1}
\gamma_i^{(4)*}-2\alpha  \dddot{\gamma}_i^*+(\alpha^2-\tfrac{w_1}{w_3}) \ddot{\gamma}_i^* +  \tfrac{\alpha w_1}{w_3} \dot{\gamma}_i^* +\tfrac{w_2}{w_3N}\sum_{j=1}^{N}  (\gamma_i^* - \gamma_j^*)=0,
\end{equation}
with the transversality conditions 
\begin{equation}\label{eq-transver-p}
\begin{split}
   & \lim_{T\to\infty} e^{-\alpha T}\ddot{\gamma}_i^*(T)=0,\\
  &  \lim_{T\to\infty} e^{-\alpha T}(\dot{\gamma_i}^*(T)+\alpha \ddot{\gamma}_i^*(T) - \dddot{\gamma}_i^*(T))=0.
\end{split}
\end{equation}
Using~\eqref{eq-transver-p} together with the boundary conditions in~$\mathcal{B}_i^{0}$, 
we solve the system of Euler--Lagrange equations in~\eqref{eq-5-poly-1} and find the unique solution. 
Depending on the sign of the parameter $W$, 
we obtain~\eqref{eq-sol-General-3}, \eqref{eq-sol-General-2}, 
or~\eqref{eq-explicit-solution}, respectively. 

To prove the existence of the solution to Problem \ref{prob-exp-stab}, we note that since $\gamma_i^*$ solves the Euler-Lagrange equation in \eqref{eq-5-poly-1}, and the variational problem in \eqref{eq-min-prob} is quadratic, we have that  $\gamma_i^*$ is the minimizer of \eqref{eq-min-prob}. Subsequently, $\bm{\gamma}^*$ is a Nash equilibrium of Problem \ref{prob-exp-stab}.
        \end{proof}
  Next, we prove the existence and exponential convergence of the solution to the constrained problem, i.e. Problem \ref{prob-first}. 
\begin{theorem}\label{theorem-exp-stab}
   Let the weights satisfy one of the following conditions: $w_1^2-4w_2w_3<0$ or $
                w_1^2-4w_2w_3\geq 0$ with $w_1=w_3 O(\alpha)$. Suppose that Assumptions \ref{hyp:1}-\ref{hyp:2}  hold.  Then, for any initial condition 
    $\gamma^0 = (\gamma_1^0, \dots, \gamma_N^0)$ 
    and given the physical constraints of the UAVs 
    (denoted by $V_1^i$ and $V_2^i$), 
    there exists $\alpha > 0$ such that 
    Problem~\ref{prob-first} admits a unique solution, 
    and the corresponding trajectory exhibits 
    \emph{exponential convergence} to the equilibrium trajectory.
\end{theorem}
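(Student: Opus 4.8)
The plan is to reduce the constrained Nash equilibrium problem (Problem~\ref{prob-first}) to the unconstrained one (Problem~\ref{prob-exp-stab}), which is already solved in Proposition~\ref{prop-exp-first}, through the affine change of variables $\beta_i = \gamma_i + t$ that identifies the physical virtual time $\beta_i$ with the deviation variable $\gamma_i$ appearing in Problem~\ref{prob-exp-stab}. Under this substitution the pace penalty $w_1(\dot\beta_i-1)^2$ becomes $w_1\dot\gamma_i^2$, while the coordination and control-effort terms are unchanged, so $J_i$ evaluated on $\beta$ equals $I^\alpha$ evaluated on the deviations; and, read in the deviation variable, the constrained admissible set $\Aa^{2,\alpha}_i$ is a convex \emph{subset} of $\mathcal{B}^0_i$. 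The structural fact I would exploit is that the game is convex: for fixed $\gamma_{-i}$ the functional $\gamma_i\mapsto J_i$ is strictly convex (since $w_3>0$) and the quadratic coupling is symmetric, so the game admits a convex potential. Hence the constrained Nash equilibria are exactly the minimizers of this potential over the convex constrained set, unique by the same strict-convexity argument underlying Proposition~\ref{prop-exp-first}; and if the unconstrained minimizer turns out to be feasible, it is automatically this unique constrained equilibrium.

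With this reduction the whole theorem rests on showing that the explicit unconstrained equilibrium $\bm\gamma^*$ of Proposition~\ref{prop-exp-first} actually lies in $\Aa^{2,\alpha}_i$ (after translation) for a suitable discount rate $\alpha$. Translating the constraints, I must verify $-1\le\dot\gamma_i^*\le V_1^i-1$ together with $|\ddot\gamma_i^*|\le V_2^i$. First I would differentiate the explicit forms \eqref{eq-sol-General-3}, \eqref{eq-sol-General-2}, \eqref{eq-explicit-solution}: since $z_i^r$ converges and every remaining mode decays ($\mu_3,\mu_1^r,\mu_2^+<0$, and $\mu_1^-<0$), both $\dot\gamma_i^*$ and $\ddot\gamma_i^*$ are finite sums of exponentially decaying (possibly oscillatory) terms that vanish at infinity and start from the prescribed value $\dot\gamma_i^*(0)=0$. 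I would then bound $\|\dot\gamma_i^*\|_{L^\infty}$ and $\|\ddot\gamma_i^*\|_{L^\infty}$ in terms of $\alpha$, the weights, and the initial spread $\{\gamma_j^0\}$, treating separately the oscillatory case $W<0$ (where I control the amplitude of the $e^{\mu_1^- t}(C_{i1}^-\cos\nu_1 t+C_{i2}^-\sin\nu_1 t)$ envelope) and the non-oscillatory case $W\ge 0$ (where the scaling $w_1=w_3\,O(\alpha)$ keeps the real eigenvalues and the coefficients $C_{ki}^r(\alpha),H_{ki}^r(\alpha)$ in the right regime).

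The crux, and the step I expect to be the main obstacle, is proving that these $L^\infty$ bounds can be forced \emph{simultaneously} below $V_1^i-1$ and $V_2^i$ by an appropriate choice of $\alpha$. The difficulty is that the eigenvalues $\mu_k(\alpha)$ and the coefficients $C_{ki}^r(\alpha),H_{ki}^r(\alpha)$ depend on $\alpha$ in a coupled, nonlinear way through the characteristic polynomial of \eqref{eq-5-poly-1}, so the transient amplitude and its decay rate move together; I must exhibit a genuine window of $\alpha$ in which the correction from the initial disagreement is spread gently enough that velocity and acceleration remain admissible while still producing exponential decay. Concretely I would pass to the disagreement dynamics $s_i=\gamma_i^*-\tfrac1N\sum_j\gamma_j^*$, apply the shift that symmetrizes the roots of the characteristic polynomial about $\alpha/2$, and track the $\alpha\to 0^+$ asymptotics of the stable roots and their coefficients to obtain monotone control of the suprema. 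Once feasibility is secured for such an $\alpha$, the conclusion is immediate: the feasible unconstrained minimizer is the unique Nash equilibrium of Problem~\ref{prob-first} by the convex-potential argument above, and its exponential convergence to the synchronized equilibrium trajectory is inherited verbatim from Proposition~\ref{prop-exp-first}.
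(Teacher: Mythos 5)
Your overall architecture matches the paper's: translate by $\beta_i=\gamma_i+t$ so that Problem~\ref{prob-first} becomes the unconstrained Problem~\ref{prob-exp-stab} restricted to the convex subset $\mathcal{A}^{2,\alpha}_i\subset\mathcal{B}^{0}_i$, invoke the explicit solution of Proposition~\ref{prop-exp-first}, show that for a suitable discount rate this unconstrained equilibrium is feasible, and then inherit uniqueness and exponential convergence. The translation of the constraints to $-1\le\dot\gamma_i^*\le V_1^i-1$ and $|\ddot\gamma_i^*|\le V_2^i$ is also correct. The paper does not phrase the game as a potential game, but your convexity argument for why a feasible unconstrained equilibrium is automatically the constrained one is sound and essentially what the paper uses implicitly.

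The genuine gap is in the step you yourself flag as the crux: you propose to secure feasibility by tracking the $\alpha\to 0^+$ asymptotics of the roots and coefficients, and this goes in the wrong direction. As $\alpha\to 0^+$ the stable roots (e.g.\ $\mu_1^-$ in \eqref{eq-def-mu-App}) and the transient coefficients $C^r_{1i},C^r_{2i},H^r_{3i}$ converge to \emph{fixed nonzero limits} determined by the weights and the initial disagreement $S_i=\sum_j(\gamma_i^0-\gamma_j^0)$; there is no small parameter, so $\|\dot\gamma_i^*\|_{L^\infty}$ and $\|\ddot\gamma_i^*\|_{L^\infty}$ remain of the order of the initial spread and cannot be forced below arbitrary bounds $V_1^i-1$, $V_2^i$. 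The correct regime — the one the paper uses — is $\alpha\to\infty$: there every relevant quantity scales as in \eqref{eq-limits}, namely $\mu_3,\mu_1^r,\mu_2^+,\nu_1,H^r_{3i},C^r_{1i},C^r_{2i}=O(1/\alpha)$ while $H^r_{1i}=\gamma_i^0+O(1/\alpha)$, so $\gamma_i^*$ is a combination of uniformly bounded modes with $O(1/\alpha)$ amplitudes and $O(1/\alpha)$ rates, whence $\dot\gamma_i^*$ and $\ddot\gamma_i^*$ are $O(1/\alpha)$ uniformly in $t$ and feasibility holds for $\alpha$ large enough (depending on the initial condition). Intuitively, a large discount rate makes the optimizer act gently, which is exactly what keeps the velocity and acceleration constraints inactive; a small discount rate makes it act aggressively. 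Replacing your $\alpha\to 0^+$ analysis by the $\alpha\to\infty$ asymptotics of the explicit constants from Appendix~\ref{appen:A} closes the gap and recovers the paper's proof; the rest of your argument then goes through as written.
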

\begin{proof} In the detailed proof of Proposition~\ref{prop-exp-first} (see Appendix~\ref{appen:A}), 
we explicitly track the dependence of the constants 
$\mu_3$, $\mu_1^r$, $\mu_2^+$, $H_{1i}^r$, $H_{2i}^r$, 
$C_{i1}^r$, $C_{i2}^r$, and~$\nu_1$ (for $r \in \{0, +, -\}$) 
on the parameters $\alpha$, $w_1$, $w_2$, and~$w_3$. Using these dependencies as $\alpha\to\infty$, we have
\begin{align}
      &H^{r}_{1i}= \gamma_i^0+O(\tfrac{1}{\alpha}),\mu_3=O(\tfrac{1}{\alpha}),\mu^r_1=O(\tfrac{1}{\alpha}),\mu^+_2=O(\tfrac{1}{\alpha})\notag\\\label{eq-limits}
     &\nu_1=O(\tfrac{1}{\alpha}),H^{r}_{i3}=O(\tfrac{1}{\alpha}),C^{r}_{1i}=O(\tfrac{1}{\alpha}), C^{r}_{2i}=O(\tfrac{1}{\alpha}),
\end{align}
where $r \in \{0, +, -\}$. On the other hand, the explicit  solution to Problem \ref{prob-exp-stab} (see \eqref{eq-sol-General-3}, \eqref{eq-sol-General-2}, \eqref{eq-explicit-solution}) is a combination of uniformly bounded  functions in $t$: $\cos(\nu_1 t),\quad \sin(\nu_1 t),\quad e^{\mu_1 t},\quad e^{\mu_3 t}$.
This  with \eqref{eq-limits} implies that taking large enough $\alpha$ for the explicit solution to Problem \ref{prob-exp-stab}, we obtain $(\gamma_i^*+t)\in \mathcal{A}^{2,\alpha}_i$. Therefore, $(\bm{\gamma}^*+t)$ is a solution to the constrained problem, Problem \ref{prob-first}, as well. By Proposition \ref{prob-exp-stab}, it follows that the solution exhibits exponential convergence with respect to the equilibrium trajectory.
\end{proof}
\vspace{-0.5cm}
\subsection{Algorithm}
\label{sec:sim}
In this part, we develop an algorithm based on Problem \ref{prob-first} to achieve time coordination among UAVs. The algorithm employs the idea of MPC, as a natural discrete-time continuation of the original optimization problem \ref{prob-first} (see Fig.~\ref{fig:problem_flow}). Beyond this conceptual consistency, the MPC formulation offers several practical advantages under non-ideal conditions, such as variations in network topology and path-following errors. In particular, MPC provides a systematic mechanism for enforcing input and state constraints at each time step while optimizing over a finite prediction horizon, which is an essential capability for UAV coordination, where actuation limits, such as constraints on velocity and acceleration, must be satisfied. Moreover, MPC formulation enables the accommodation of time-varying objectives of individual UAVs and autonomous adaptation to environmental changes and external disturbances. For instance, the effectiveness of this feature is illustrated through the demonstrated collision avoidance behaviour. To clarify the relationship between the continuous-time coordination problem (Problem~\ref{prob-first}) and the discrete-time MPC formulation presented in this section (Problem~\ref{Discrete}), we present a flow chart in Fig.~\ref{fig:problem_flow}.
\begin{figure}[t]
\centering
\begin{tikzpicture}[
  block/.style={rectangle, draw, rounded corners, minimum width=2.8cm, minimum height=0.5cm, text centered, font=\footnotesize},
  arrow/.style={-{Stealth}, thick, font=\scriptsize},
  node distance=0.5cm
]

\node[block, align=left] (p1) {Problem 1:\\ Continuous-time coordination};
\node[block, below=of p1, align=left] (p2) {Discretization:\\ Infinite-horizon formulation};
\node[block, below=of p2, align=left] (p3) {Problem 3:\\ MPC optimization};

\draw[arrow] (p1) -- (p2) node[midway,right]{Discretize time, reformulate cost/constraints};
\draw[arrow] (p2) -- (p3) node[midway,right]{Receding horizon, real-time solution};
\end{tikzpicture}
\caption{Flow chart showing the transition from the continuous-time coordination problem (Problem~\ref{prob-first}) to the discrete-time MPC problem (Problem~\ref{Discrete}).}
\label{fig:problem_flow}
\end{figure}
In the formulation of Problem \ref{prob-first}, the communication network is static and completely connected; i.e., each UAV is connected to all others involved in the mission. The proposed algorithm relaxes these assumptions. 
We assume that the communication network remains static during each MPC step but may change between steps. The communication network change is incorporated into the cost function.  Furthermore, the algorithm considers the path-following errors in the constraints that may arise when each UAV follows its desired trajectory operating in an uncertain and dynamic environment.

\begin{problem}\label{Discrete}  Let $h>0$ be the time step. To approximate the virtual time at time $t_{k}=kh$,  each UAV solves the following minimization problem:
\begin{equation*}
\begin{cases}
\min_{y^{k}_{i}} J_i\left(y^{k}_{i},\bar{s}^{k}_{-i}\right)\\
s^{k}_{i\tau+1}=s^{k}_{i\tau}+h\ell^{k}_{i\tau}+\dfrac{h^2}{2} u^{k}_{i\tau},\quad \tau=0,\dots,K-1,\\
\ell^{k}_{i\tau+1}=\ell^{k}_{i\tau}+hu^{k}_{i\tau},\quad \tau=0,\dots, K-1,\\
s^{k}_{i0}= s^{k-1}_{i1}-\alpha_{i}^{k},\quad \ell^{k}_{i0}= \ell^{k-1}_{i1},\\
\dot{\gamma}^i_{ \min }\leq\ell^{k}_{i\tau}\leq\dot{\gamma}^i_{ \max },\quad \tau=0,\dots, K,\\
\left|u^{k}_{i\tau}\right|\leq\ddot{\gamma}^i_{ \max },\quad\tau=0,\dots, K-1,
\end{cases}
\end{equation*}
for $k=1,2,\dots$. The optimization variables are $y^{k}_{i}=[s^{k}_{i},\ell^{k}_{i},u^{k}_{i}]$,
where $s^{k}_{i}=[s^{k}_{i0},s^{k}_{i1},\dots,s^{k}_{iK}]$ and
\begin{equation*}
\begin{split}
\ell^{k}_{i}=[\ell^{k}_{i0},\ell^{k}_{i1},\dots,\ell^{k}_{iK}],\quad
u^{k}_{i}=[u^{k}_{i0},u^{k}_{i1},\dots,u^{k}_{iK-1}].
\end{split}
\end{equation*}
\end{problem} 
The cost function $J_i$ is defined as follows:
\begin{equation}\label{cost1}
J_i(y^{k}_{i},\bar{s}^{k}_{-i})=\sum_{\tau=1}^{K}( \ell^{k}_{i\tau}-1)^2+F_i(s^{k}_{i},\bar{s}^{k}_{-i})+\sum_{\tau=0}^{K-1}{u^{k}_{i\tau}}^2,
\end{equation}
with $\bar{s}^{k}_{-i} = \left[\bar{s}^{k}_{1},\dots,\bar{s}^{k}_{i-1},\bar{s}^{k}_{i+1},\dots,\bar{s}^{k}_{N}\right]$
and $F_i(s^{k}_{i},\bar{s}^{k}_{-i})=\sum_{j\in \mathcal{N}_{ik}}\sum_{\tau=1}^{K}(s^{k}_{i\tau}-\bar{s}^{k}_{j\tau})^2$, 
where $\mathcal{N}_{ik}$ denotes the neighborhood of $i^{\rm th}$ UAV, which is the set of UAVs that communicate with $i^{\rm th}$ one at a time $t_{k}$.
All terms of the cost function in \eqref{cost1} have been discussed in the continuous case; see Section \ref{sec:prob}.  The newly added correction term $\alpha_{i}^{k}$  appears in initial conditions and takes into account the path-following error:
\begin{equation}\label{def-a}
\alpha_{i}^{k} =\alpha_{i}^{k}(x_{i}(t_{k}))= \beta\tfrac{\left(x_{\gamma,i}(t_{k})-x_{i}(t_{k})\right)^T\dot{x}_{\gamma,i}(t_{k})}{\left|\left|\dot{x}_{\gamma,i}(t_{k})\right|\right|+\delta},
\end{equation}
where $\beta$ and $\delta$ are positive parameters and $x_{i}(t_{k})$ is the $i^{\rm th}$ UAV's actual position. The correction term $\alpha_{i}^{k}$ is negative if the UAV's actual location projection on the desired trajectory is ahead of the desired position $x_{\gamma,i}(t_{k})$ and positive otherwise. Including $\alpha_{i}^{k}$ in initial conditions causes the UAV to slow down to reduce the forward overshoot and to accelerate to catch up with the desired trajectory.

\begin{remark}\label{remark-path-yes}
Unlike the ideal setting of Problem~\ref{prob-first} (see Assumption~\ref{hyp:2} and Remark~\ref{remark-no-path}), 
Problem~\ref{Discrete} explicitly accounts for path-following errors. 
Such errors may arise when the trajectory reparameterized by the virtual time—subject only to the linear constraints in 
\eqref{def-constraits-beta-deriv} and \eqref{def-constraits-beta-double-deriv}—is not dynamically feasible. 
These effects are incorporated through the correction term~$\alpha_i$. 
Provided that each UAV employs a sufficiently accurate low-level path-following controller, the proposed coordination method remains effective. 
This is validated by the simulation results in Section~\ref{subsection-non-ideal-path}.
\end{remark}

The parameter $\delta$ in the correction term $\alpha_{i}^{k}$ is introduced to prevent division by zero; it can be set as a positive constant close to $1$ without affecting the overall performance. The parameter $\beta$ regulates the influence of the path-following error on the coordination process. Increasing $\beta$ accelerates time coordination but may cause constraint violations if chosen excessively large.

At next  time instance $t_{k+1}$ each UAV shares the computed $[s^{k}_{i1},\dots,s^{k}_{iK}]$ (solution to Problem \ref{Discrete}) with the UAVs that are from its neighborhood
  $\mathcal{N}_{ik+1}$ at time instance $t_{k+1}$. For the $j^{\rm th}$ UAV from $\mathcal{N}_{ik+1}$, we have $\bar{s}^{k+1}_{j\tau}=s^{k}_{j\tau+1}$, $\tau=1,\dots,K-1$.  
The algorithm starts at time $t_{1}$ ($k=1$)  with  $s_{i0}^{1}=\gamma_{i}^{0}$, $l_{i0}^{1}=\dot{\gamma}_{i}^{0}$ and $\bar{s}_{j\tau}^{1}=\gamma_{j}^{0}+\tau h$, $\tau=1,\dots, K$, which are known due to initial information sharing between neighboring UAVs. Eventually, 
as approximations to $\gamma_{i}(t)$ and to its first-order and second-order derivatives, we set
\begin{equation*}
\begin{split}
\gamma_{i}(t_{k})\approx s^{k}_{i1},\quad \dot{\gamma}_{i}(t_{k})\approx \ell^{k}_{i1},\quad \ddot{\gamma}_{k}(t_{k})\approx u^{k}_{i1}.
\end{split}
\end{equation*}

The second term of the cost function \eqref{cost1} shows that the change in the communication can significantly impact the virtual time, particularly its second-order derivative. To mitigate and smoothen these effects, mainly in scenarios where communication depends on the distance between UAVs, we introduce the  smoothing function:
$\phi\left(z,p_{1},p_{2}\right)=1$, when $z<  p_{1}$,  $\phi\left(z,p_{1},p_{2}\right)=\eta(z)$, when $ p_{1}\leq z \leq  p_{2}$ and $ \phi\left(z,p_{1},p_{2}\right) =0$, when $p_{2}<z$. Here,
 $p_{1}<p_{2}$ are positive parameters and $0\leq\eta(z)\leq1$, such that $\phi\in C^{2}(\Rr^+)$ for any $p_{1}<p_{2}$.
The function $\phi$ quantifies the quality of communication between UAVs based on their relative distances. Communication is excellent if the distance between two UAVs is less than $p_{1}$. In contrast, communication is lost if the distance exceeds $p_{2}$.  For distances in the intermediate range $\left[p_{1},p_{2}\right]$, the quality of communication smoothly transitions between these two extremes.  The modified cost function is
\begin{multline}\notag
J_i(y^{k}_{i},\bar{s}^{k}_{-i})=\sum_{\tau=1}^{K}( \ell^{k}_{i\tau}-1)^2+F_i^{\phi}(s^{k}_{i},\bar{s}^{k}_{-i})+\sum_{\tau=0}^{K-1}(u^{k}_{i\tau})^2
\end{multline}
with
$F_i^\phi(s^{k}_{i},\bar{s}^{k}_{-i})
=
\sum_{j\neq i}\phi(d_{i,j}^{k,0},a,b)\sum_{\tau=1}^{K}(s^{k}_{i\tau}-\bar{s}^{k}_{j\tau})^2$,
where $d_{i,j}^{k,\tau}= ||x_{d,i}(s^{k}_{i\tau})-x_{d,j}(\bar{s}^{k}_{j\tau})||$.
Although the smoothing function is not convex, the modified cost function is quadratic because the function $\phi$ is treated as a constant in the cost function for each MPC step. Since it also does not impact the constraints,  Problem \ref{Discrete} remains a quadratic optimization problem. For the numerical analysis, we use equal weights, $w_1=w_2=w_3$, for all penalization terms. A sensitivity analysis of the weights, the path-following gain $\beta$ and MPC horizon $K$, is provided at \href{https://github.com/mikayel2/swarm_timecoord_mpc}{this
 link}.
\subsubsection{Collision Avoidance}\label{remark-2} The formulation of Problem \ref{Discrete} is agile and can be modified to achieve additional goals; for example, the cost function can be augmented with additional terms that  ensure collision avoidance:    
\begin{multline}\label{costphipsi}
J_i(y^{k}_{i},\bar{s}^{k}_{-i})=\sum_{\tau=1}^{K}( \ell^{k}_{i\tau}-1)^2+\sum_{\tau=0}^{K-1}{u^{k}_{i\tau}}^2\\+G_i(s^{k}_{i},\bar{s}^{k}_{-i})+F_i^{\phi,\psi}(s^{k}_{i},\bar{s}^{k}_{-i}).
\end{multline}
The third term of \eqref{costphipsi} is defined as follows:
\begin{equation}\label{collisionavoidance}
\begin{split}
G_{i}(s^{k}_{i},\bar{s}^{k}_{-i}) = &\sum_{j\neq i} \sum_{\tau=1}^{K}\tfrac{C_{i}\phi(d_{i,j}^{k,\tau},a,b)}{({d_{i,j}^{k,\tau}})^2}.
\end{split}
\end{equation}
This term becomes active when the distance between the UAVs falls below $b$, prompting the UAVs to adjust the speed and ensure collision avoidance. The effect of the collision avoidance term \eqref{collisionavoidance} is the highest when the distance is less than or equal to $a$. $C_{i}$ is a positive parameter that weighs the collision avoidance term impact for each UAV. Additionally, to ensure  time coordination, the term $F_i^{c m}$ is defined as follows:
\begin{equation*}
F_i^{\phi,\psi}(s^{k}_{i},\bar{s}^{k}_{-i}) = \sum_{j\neq i} \phi(d_{i,j}^{k,0},c,d)\psi(d_{i,j}^{k,0},a,b)\sum_{\tau=1}^{K}(s^{k}_{i\tau}-\bar{s}^{k}_{j\tau})^2,
\end{equation*}
where $a <b < c<d$ and $\psi(z,q_{1},q_{2})=0$, when $z< q_{1}$, $\psi(z,q_{1},q_{2})= \xi(z)$, when $ q_{1} \leq z \leq q_{2}$ and $
\psi(z,q_{1},q_{2})=1,q_{2}\leq z$. Here,
$q_{1}<q_{2}$ are positive parameters and $0\leq\xi(z)\leq1$,  such that $\psi\in C^{2}(\Rr^+)$. As a result, time coordination is achieved when the distance between the UAVs is greater than $a$. Furthermore, coordination is disregarded when the distance is less than $a$, and only collision avoidance is active.
Although the cost function \eqref{costphipsi} is not convex, the optimization problem is still simple enough to solve efficiently in real-time.  
\vspace{-0.55cm}
\section{Simulations}\label{sec:sim1}
 In this section, we demonstrate the effectiveness and applicability of the proposed method in complex, realistic scenarios.
\vspace{-1cm}
\subsection{Simulation Framework}
We use a high-fidelity simulation framework with multirotor UAV dynamics to demonstrate the performance of the algorithm. The simulation setup incorporates factors such as aerodynamic effects, actuator limitations, and environmental disturbances. The simulations validate the methodology and showcase its agility when deployed in various scenarios.

\subsubsection*{Software Setup} The simulations were performed using RotorPy \cite{folk2023rotorpy}, an open-source Python-based simulator designed for multirotor UAVs. In this work, we specifically utilize the Crazyflie multirotor model. RotorPy provides dynamic modeling of multirotor systems, including aerodynamic effects and nonlinear equations of motion.

The simulation scenarios were designed to explore different aspects of the proposed coordination algorithm. The MPC parameters used in all scenarios are as follows: the number of UAVs is fixed at $N=6$ , the prediction horizon is $K=10$, and the time step is $h=0.05 \mathrm{~s}$. The state constraints are $\gamma_{\min }^i=0$ and $\dot{\gamma}^i \in[0,2]$, while the control constraint is $\ddot{\gamma}^i \in[-6,6]$. In all simulations, we set $C_i=i+1$ in the collision-avoidance term (\ref{collisionavoidance}). Varying parameters for the scenarios are the communication terms $c$, $d$, and the collision avoidance terms $a$ and $b$, whose values are accordingly presented.

The algorithm, Alg. 1, employed in these simulations, is
designed as follows: the desired state for each UAV is computed, and then passed to the path-following controller that guides
the UAV. Subsequently,  the path-following error is evaluated, and the initial conditions of the Problem \ref{Discrete} are updated.
Then, based on the solution of Problem \ref{Discrete} the UAV transmits $[s^{k}_{i2},\dots,s^{k}_{iK}]$ to the UAVs from the neighborhood $\mathcal{N}_{ik+1}$. Afterwards, the desired state for the UAV is updated, and the process is repeated iteratively. This process results in the coordination of the multi-agent system.
\vspace{-0.5cm}
\begin{algorithm}
\caption{Multi-agent time-critical MPC}
\label{alg:algorithm}
\SetKwFunction{FGetCurrentPosition}{GetActualPosition}
\SetKwFunction{FSolveProblem}{SolveProblem3}
\SetKwProg{Fn}{Function}{:}{}
\SetKwFor{For}{for}{do}{end for}
\SetKwFor{While}{while}{do}{end while}
\SetKw{KwInitialize}{Initialize}

\KwInitialize number of agents $N$, desired trajectories $x_{di}$, state and control constraints $\dot{\gamma}^{i}_{\min}$, $\dot{\gamma}^{i}_{\max}$, $\ddot{\gamma}^{i}_{\max}$, initial conditions $\gamma_{i}^{0}$, $\dot{\gamma}_{i}^{0}$, $i=1,\dots,N$, prediction horizon $K$, total time $T$, time step $h$\;
Set $k=1$\;

\While{$kh \leq T$}{
    \For{$i \gets 1$ \KwTo $N$}{
        $x_{i}(kh) \gets \FGetCurrentPosition(x_{di}(s_{i0}^k))$\;
        Compute $\alpha_{i}^{k}(x_{i}(kh))$\;
        
        Update initial conditions\;
        $s_{i0}^{k}\gets s^{k-1}_{i1}-\alpha_{i}^{k}$, $l_{i0}^{k}\gets l_{i1}^{k-1}$\;
        
        $s_{i}^{k}$, $l_{i}^{k} \gets \FSolveProblem(s_{i0}^{k}, l_{i0}^{k}, \bar{s}_{-i}^{k})$\;
        
        Transmit $[s^{k}_{i2},\dots,s^{k}_{iK}]$ to UAVs from $\mathcal{N}_{i}^{k+1}$\;
        
        $\bar{s}^{k+1}_{i}\gets [s^{k}_{i2},\dots,s^{k}_{iK}]$\;
    }
    $k \gets k + 1$\;
}
\end{algorithm}
\vspace{-0.5cm}
\subsubsection{Ideal Communication and Ideal Path-Following}\label{sec:ideal_communication}
We consider an ideal scenario with perfect communication and no external disturbances, where all UAVs can exchange information freely. Under these conditions, six UAVs follow non-overlapping, heterogeneous trajectories in three-dimensional space. The UAVs aim to minimize their cost functions, with each starting its mission at a distinct time $\gamma_1^0 = 2$, $\gamma_2^0 = 1$, $\gamma_3^0 = 0$, $\gamma_4^0 = 3.5$, $\gamma_5^0 = 4$, and $\gamma_6^0 = 3$.  Note that only the 3rd UAV starts its mission precisely on schedule. The total mission lasts 36 seconds. The results demonstrate rapid synchronization among the UAVs, making $\gamma_1 = \gamma_2 = \gamma_3 = \gamma_4 = \gamma_5 = \gamma_6$ in around 4.15 seconds, as shown in Fig. \ref{fig:scenario1_gamma}. Moreover, the UAVs successfully adjust their velocities to converge to their desired speed profiles (see Fig. \ref{fig:scen1_gammadot}). The control input converges to zero by the 6.5$^{\rm th}$ second (Fig. \ref{fig:scen1_gammaddot}). Furthermore, the algorithm's efficiency is confirmed by a maximum MPC step execution time of 0.0157 seconds, ensuring it does not impose a significant computational burden.
The simulation results indicate that solutions of Problem \ref{prob-first} exhibit exponential convergence with respect to the equilibrium trajectory, which validates Theorem \ref{theorem-exp-stab}.
\begin{figure*}[ht]
    \centering
    \begin{subfigure}[b]{0.32\textwidth}
        \centering
        \includegraphics[width=\textwidth]{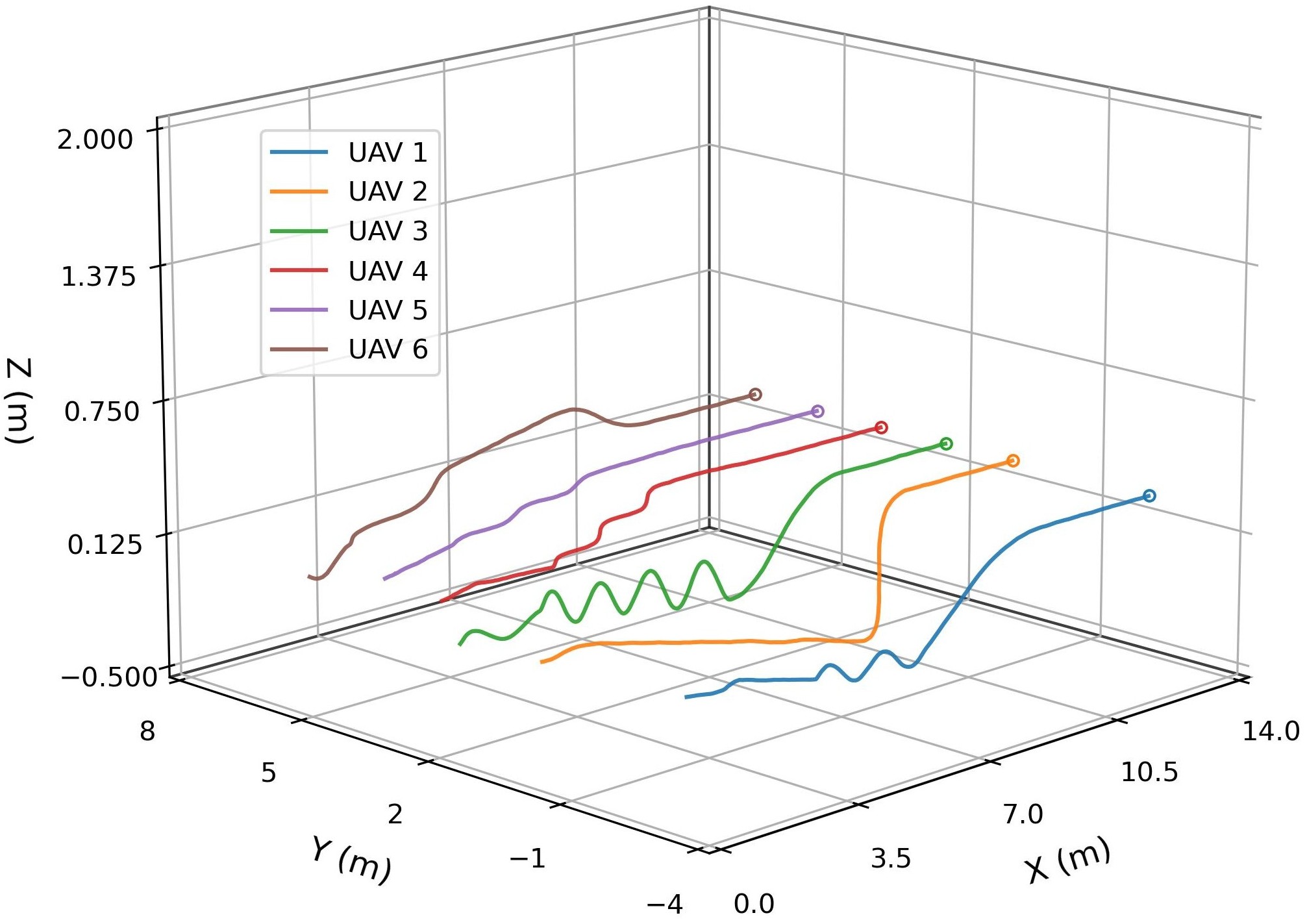}
        \caption{}
        \label{fig:scenario1_traj}
    \end{subfigure}
    \hfill
    \begin{subfigure}[b]{0.32\textwidth}
        \centering
        \includegraphics[width=\textwidth]{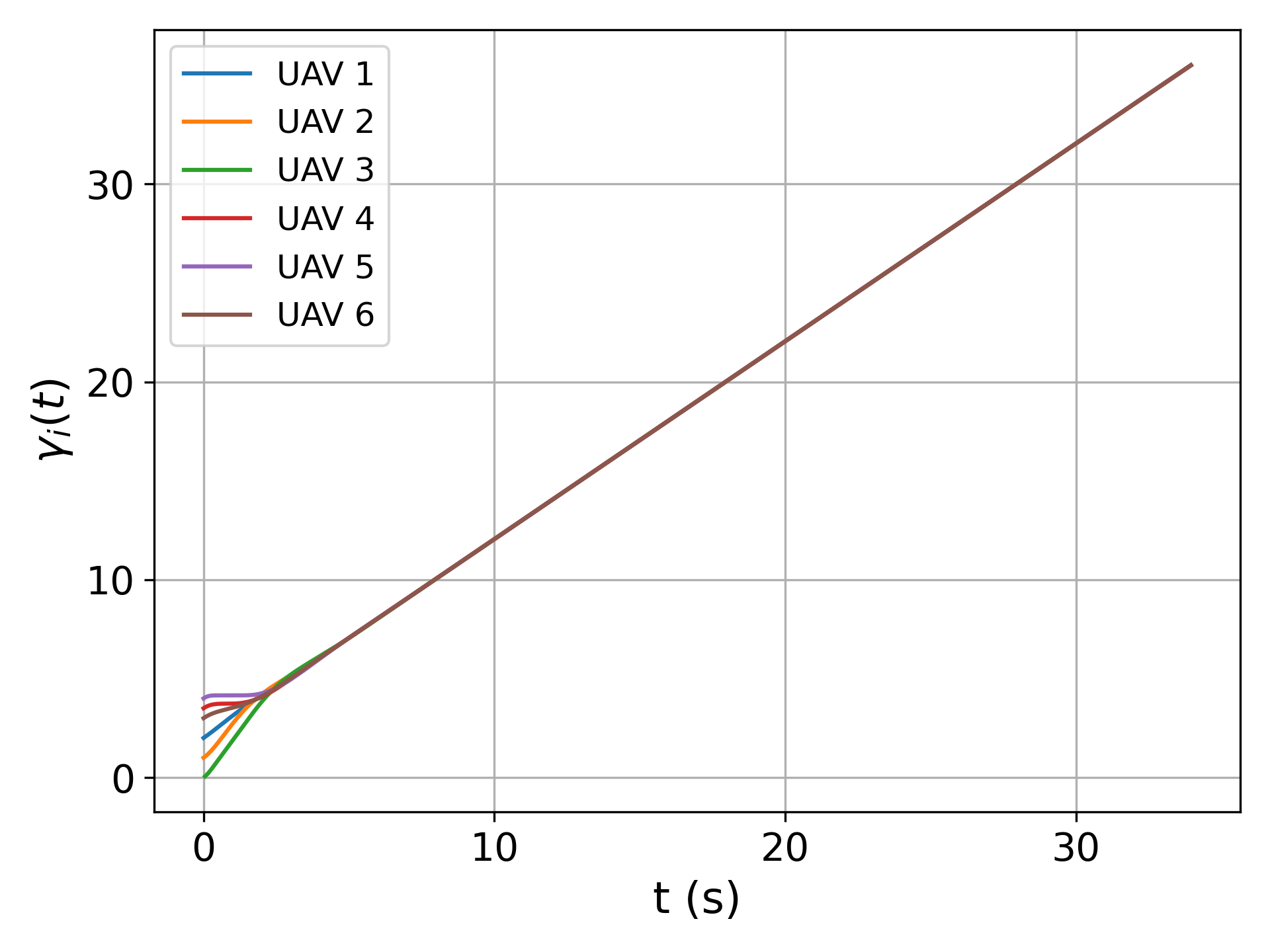}
        \caption{}
        \label{fig:scenario1_gamma}
    \end{subfigure}
    \hfill
    \begin{subfigure}[b]{0.32\textwidth}
        \centering
        \includegraphics[width=\textwidth]{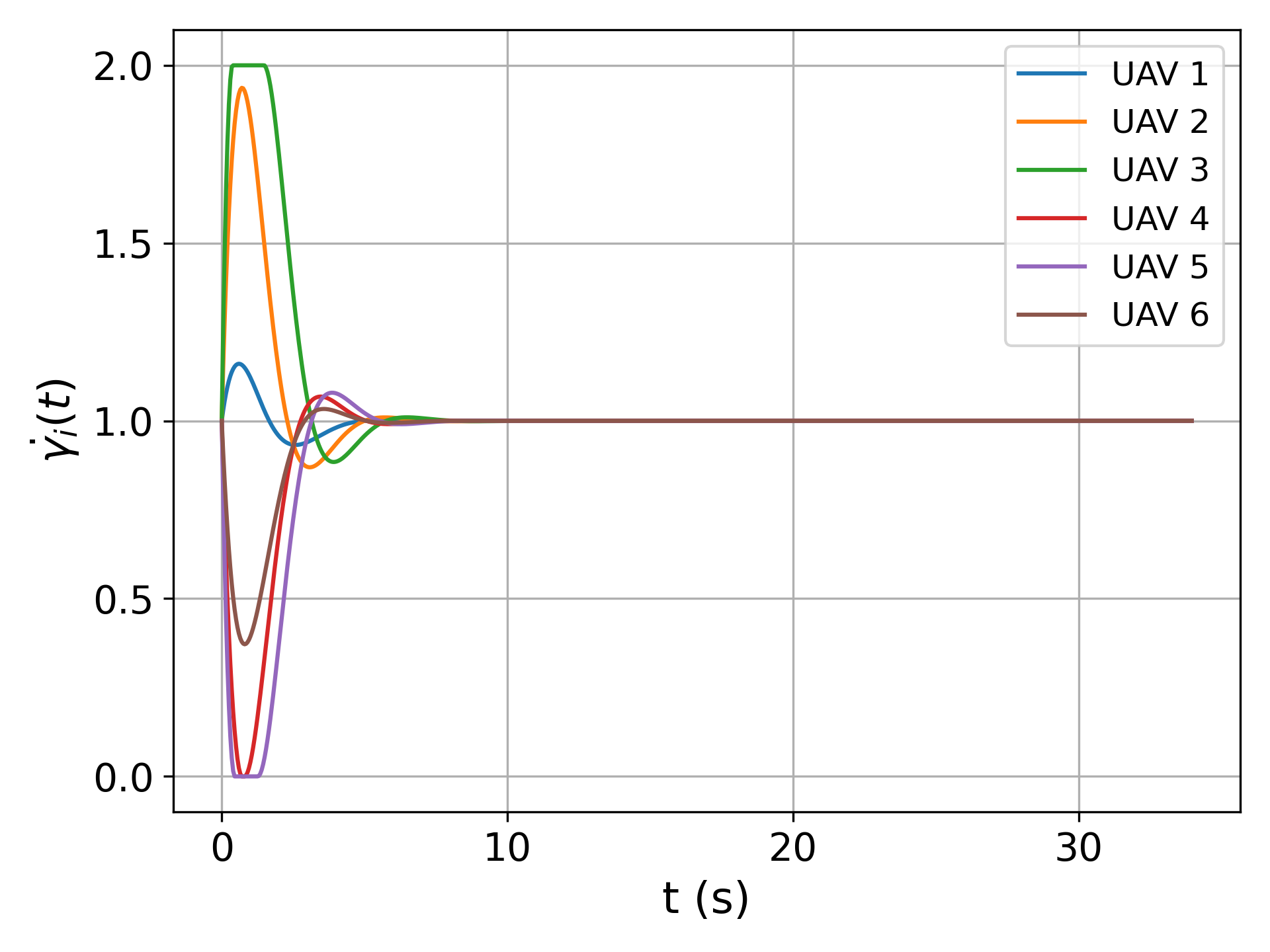}
        \caption{}
        \label{fig:scen1_gammadot}
    \end{subfigure}
    
    \caption{Ideal communication, ideal path following: (a) A top view of the actual trajectories followed by the UAVs under ideal conditions, with circles marking the final positions; (b) $\gamma_i$ over time; (c) $\dot{\gamma_i}$ over time.}
    \label{fig:three_scenario_plots}
\end{figure*}

\subsubsection{Non-Ideal Communication and Ideal Path-Following}{\label{scen2}}
Building on the conditions of the ideal communication scenario, this case introduces non-ideal communication among UAVs, where information exchange is limited. Two UAVs can share their predicted consensus parameters only if the distance between them is below a predefined threshold. The parameters for the communication term are set as follows: $c = 2.25$m, $d =4.5$m. At the beggining of the mission, UAV 1 and UAV 2 do not communicate with either UAV 5 or UAV 6. Despite intermittent communication interruptions and subsequent reconnections, the UAVs successfully achieve coordination, with the third and fourth UAVs serving as communication links between all UAVs. 

Compared to the ideal communication scenario, system synchronization is achieved in a longer time frame. Specifically, as shown in Fig.~\ref{fig:scenario1_gamma}, coordination takes approximately 4.15 seconds in an ideal communication setting. However, under the added challenging communication conditions, the coordination time in this case is almost 11 seconds (see Fig.~\ref{fig:scen2_gamma}). Along the same lines, the control input converges to zero at around 6.5 seconds in the ideal communication scenario (see Fig.~\ref{fig:scen1_gammaddot}), while under non-ideal communication conditions, it takes significantly longer—almost 12.5 seconds—to reach zero, as illustrated in Fig.~\ref{fig:scen2_gammaddot}. This delay illustrates the impact of poor communication on coordination, as it obstructs the agents' ability to quickly synchronize and stabilize the control input. Moreover, the presence of the smoothing function $\phi$  in the algorithm does not allow for sharp oscillations of the control input $\ddot{\gamma}$, which, in turn, prevents drastic changes in drone behavior. In this scenario, the maximum time taken for the MPC step calculation is 0.0166.

\begin{figure*}[ht]
    \centering
    \begin{subfigure}[b]{0.32\textwidth}
        \centering
        \includegraphics[width=\textwidth]{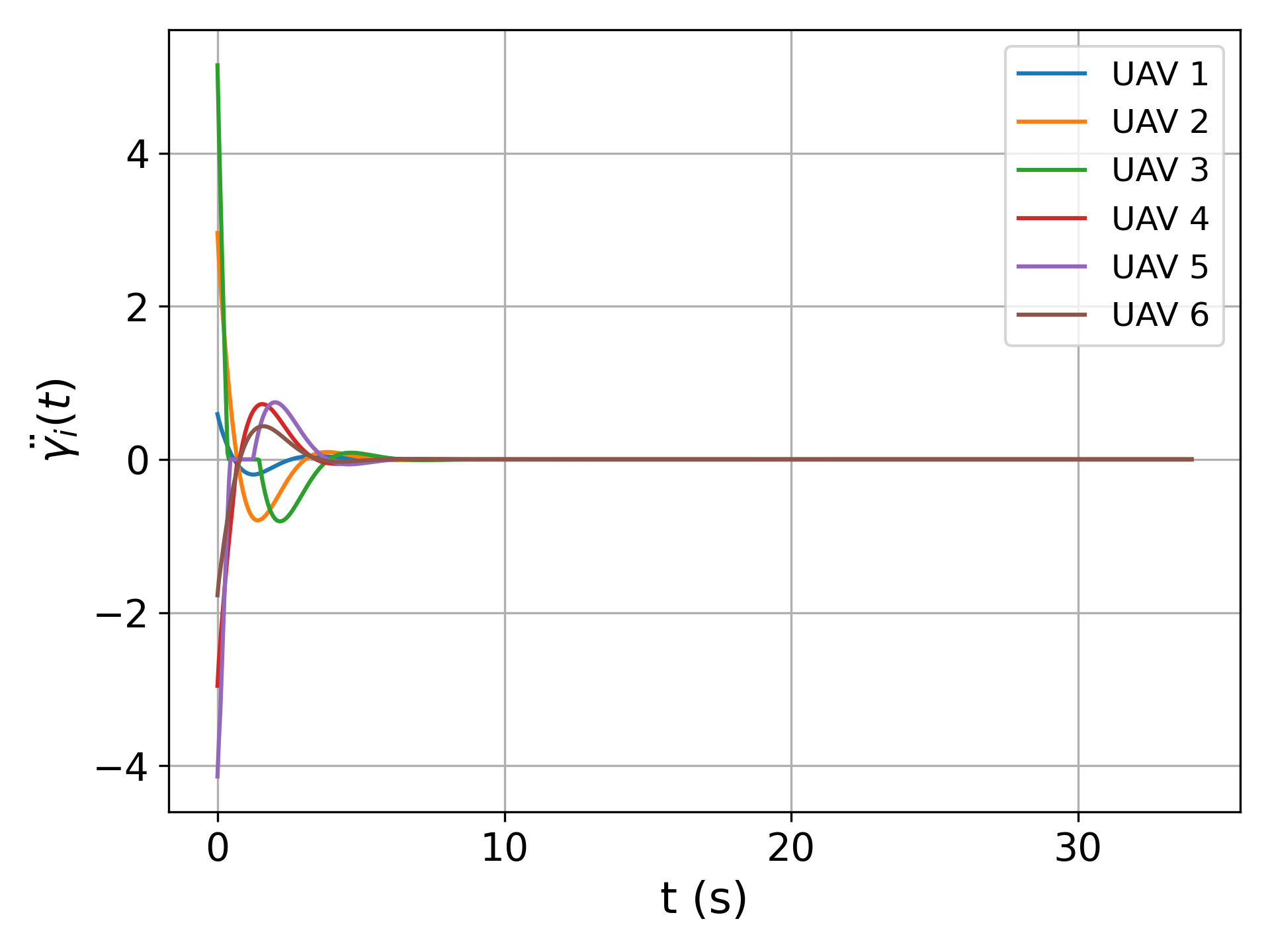}
        \caption{}
        \label{fig:scen1_gammaddot}
    \end{subfigure}
    \hfill
    \begin{subfigure}[b]{0.32\textwidth}
        \centering
        \includegraphics[width=\textwidth]{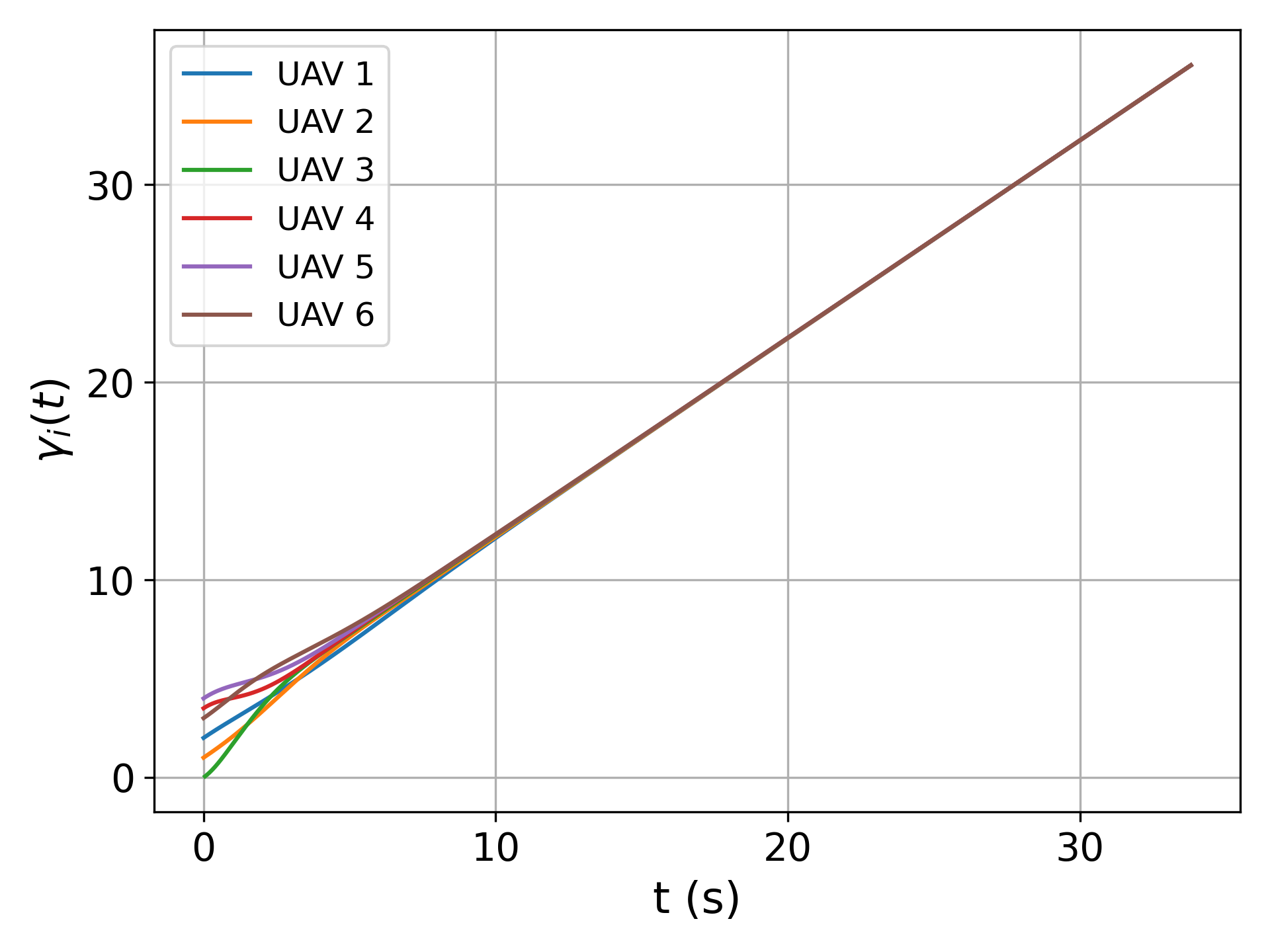}
        \caption{}
        \label{fig:scen2_gamma}
    \end{subfigure}
    \hfill
    \begin{subfigure}[b]{0.32\textwidth}
        \centering
        \includegraphics[width=\textwidth]{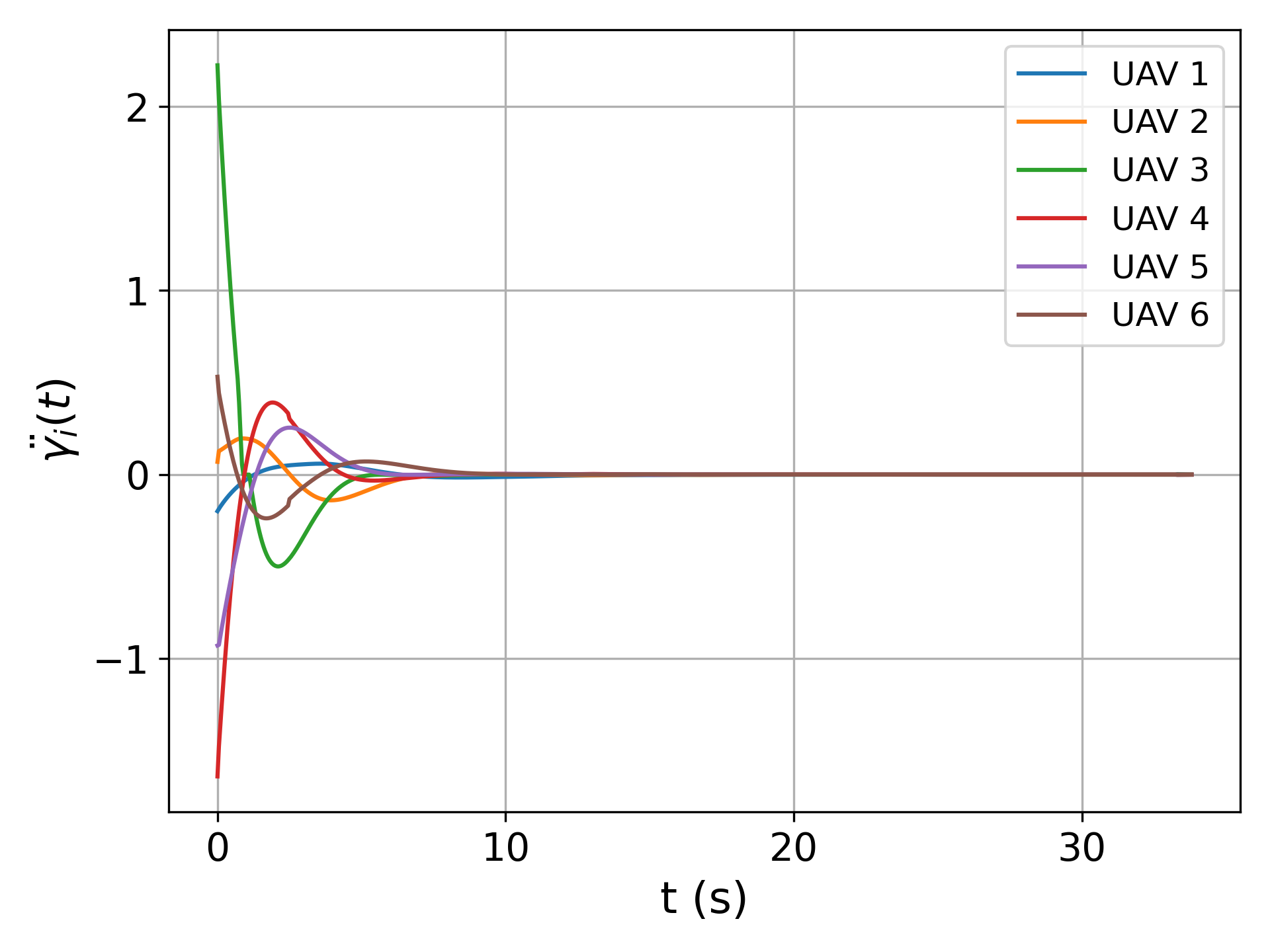}
        \caption{}
        \label{fig:scen2_gammaddot}
    \end{subfigure}
    
    \caption{ (a) Ideal communication, ideal path following, $\ddot{\gamma_i}$ over time, (b) Non-ideal communication, ideal path following, $\gamma_i$ over time, (c) Non-ideal communication, ideal path following, $\ddot{\gamma_i}$ over time.}
    \label{fig:scen1_scen2}
\end{figure*}

\subsubsection{Non-Ideal Communication and Non-Ideal Path-Following}\label{subsection-non-ideal-path}
In a non-ideal path following scenario, an additional wind disturbance is introduced along with partial communication failures. The wind disturbance affects the motions of all UAVs from the start of the 36-second simulation, lasting for a total of 18 seconds (half of the mission time). The disturbance begins with an initial wind speed of 7 m/s that decreases linearly, reaching 0 m/s precisely at the 18-second mark. The wind blows horizontally towards the negative y-axis direction (Fig. \ref{fig:scen3_traj}). This causes the UAVs to significantly shift from their desired trajectories. Additionally, the communication between UAVs is updated every 0.5 seconds, with each pair having a 70\% chance of an active communication link.

The wind-affected trajectories, depicted in Fig. \ref{fig:scen3_traj}, show the impact of the path-following term $\alpha_i^k$ (see \eqref{def-a}) introduced in the algorithm. We set the parameter $\delta = 1$ in the path-following error term. This term enables the UAVs to realign with their desired trajectories, ensuring that coordination happens during the mission. Synchronization is achieved at the 17.25-second mark, when the wind disturbance dissipates and the UAVs can ideally realign and return to their desired paths. However, communication failures delay the consensus time and affect the control input behavior. In the presence of wind disturbances and partial communication failures, the stabilization of the control inputs is notably delayed, eventually converging to zero at the 25-second mark, as shown in Fig. \ref{fig:scen3_gammaddot}. Despite this convergence, residual oscillations are still observed as a clear effect of the communication failures. In such cases, the UAVs exert more effort to stay on their designated paths. The maximum MPC step calculation time is measured to be 0.0192 seconds throughout the simulation.

\begin{figure*}[ht]
    \centering
    \begin{subfigure}[b]{0.28\textwidth}
        \centering
        \includegraphics[width=\textwidth]{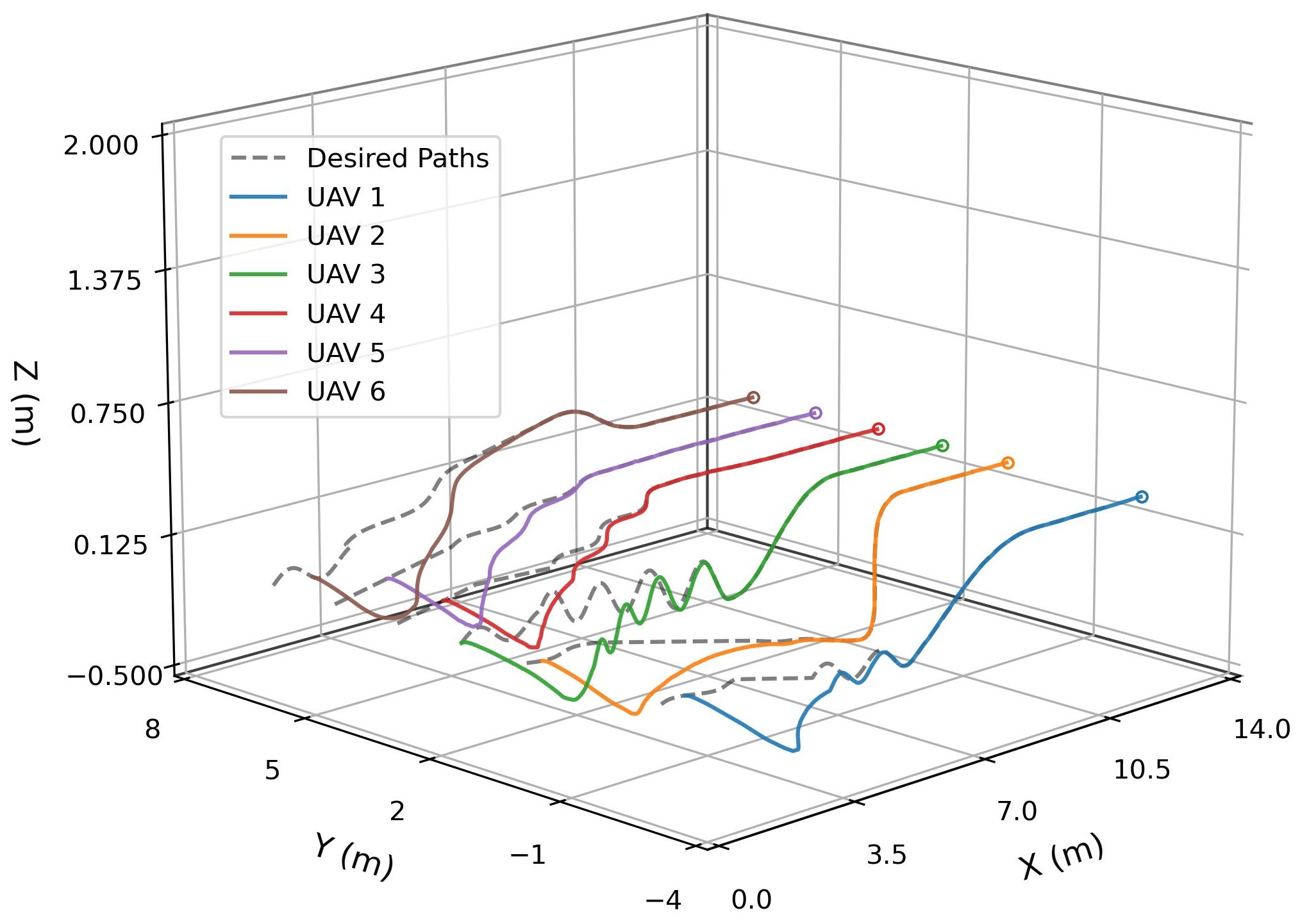}
        \caption{}
        \label{fig:scen3_traj}
    \end{subfigure}
    \hfill
    \begin{subfigure}[b]{0.28\textwidth}
        \centering
        \includegraphics[width=\textwidth]{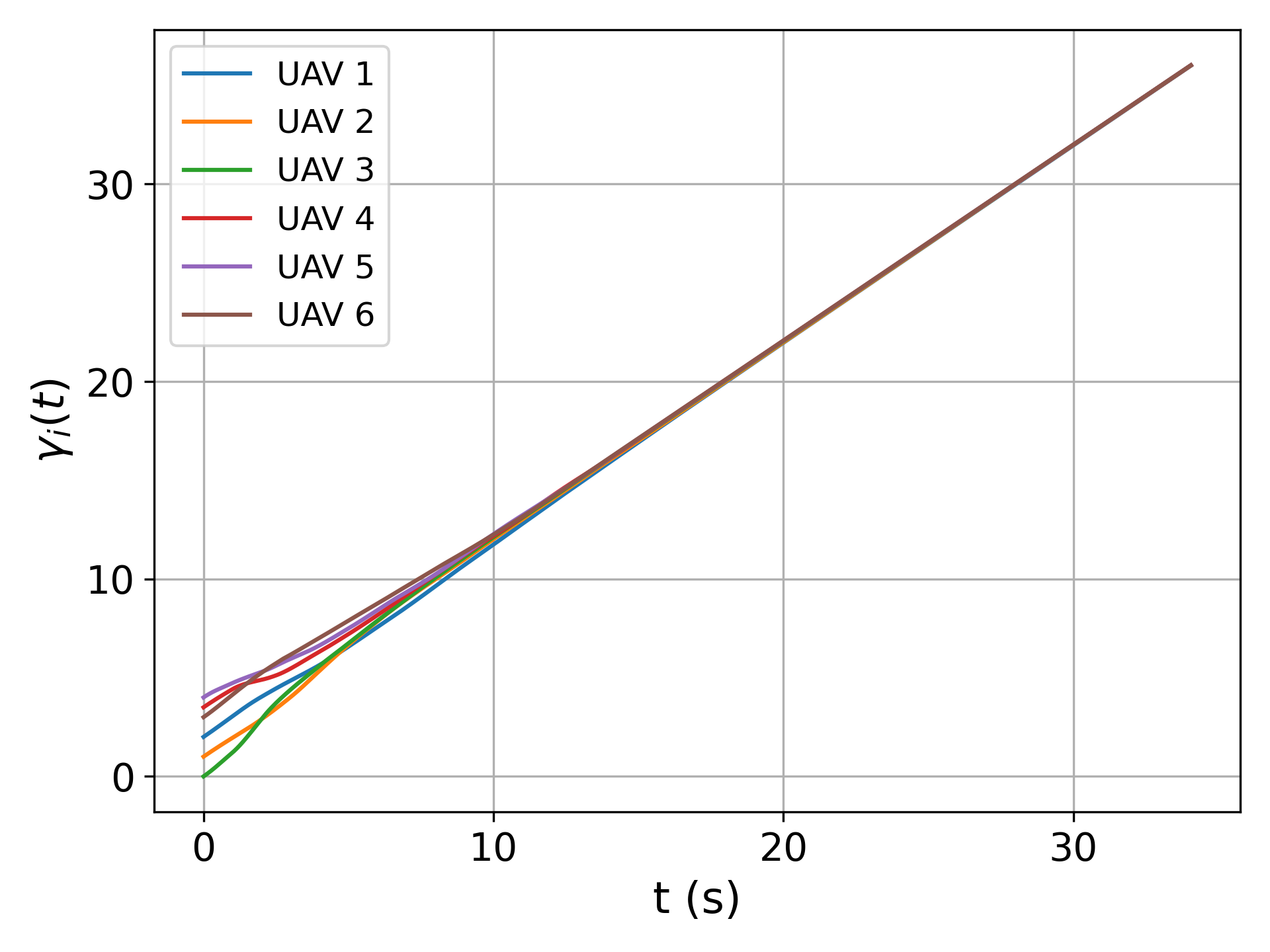}
        \caption{}
        \label{fig:scen3_gamma}
    \end{subfigure}
    \hfill
    \begin{subfigure}[b]{0.28\textwidth}
        \centering
    \includegraphics[width=\textwidth]{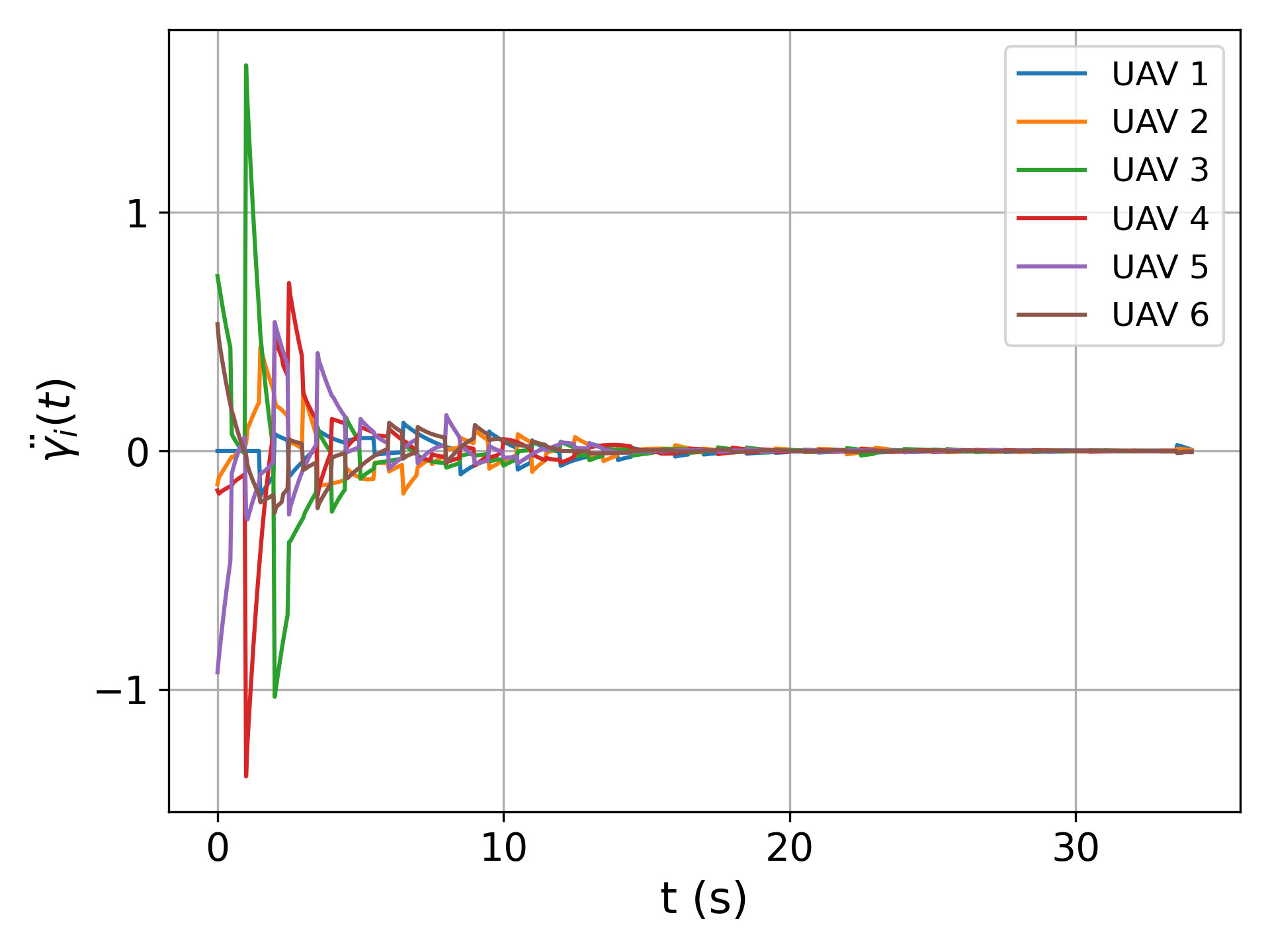}
        \caption{}
        \label{fig:scen3_gammaddot}
    \end{subfigure}
    
    \caption{Non-ideal communication, non-ideal path following: (a) A top view of the trajectories influenced by wind disturbance and partial communication failures, with circles marking the final positions; (b) $\gamma_i$ over time; (c) $\ddot{\gamma_i}$ over time.}
    \label{fig:scenario3}
\end{figure*}

\subsubsection{Collision Avoidance}\label{subsec-collison-avoid}
Based on the discussion in Section \ref{remark-2}, we demonstrate coordination along with collision avoidance, a scenario involving six UAVs following Lissajous trajectories intersecting at a single central point was implemented (Fig. \ref{fig:scen4_traj}). The trajectories are given by
${\displaystyle x(k)=X\sin(vk+\epsilon )}$ and ${\displaystyle y(k)=Y\sin(wk)}$, 
with the following parameter values: $X = 0.8$, $Y = 8$, $v = 0.5$, $w =0.25$, $\epsilon = 0$.
In this setup, no time delays are implemented; thus $\gamma_i^0 = \gamma_j^0 = 0$. The communication terms were set as follows: $c = 10$m, $d = 20$m, ensuring perfect communication throughout the mission. Collision avoidance parameters were set separately for each of the six UAVs: $a = [2.35, 2.5, 2.7, 2.8, 2.9, 3.0]$m and $b = [4.7, 5.0, 5.4, 5.6, 5.8, 6.0]$m.
The mission lasted 42 seconds, during which the UAVs completed three passes through their trajectory, successfully avoiding collisions at the intersection point and coordinating afterward. As illustrated in Fig. \ref{fig:scen4_gamma}, the UAVs avoided collisions between seconds 8 and 13, followed by complete synchronization between seconds 16 and 18. For further clarity, the minimal distance between any two UAVs was plotted over time, Fig. \ref{fig:min_distance}, which confirms that a minimum separation of at least 0.5 meters was consistently maintained throughout the mission, making sure that no collisions occurred during the simulation. During the mission, the maximum execution time for a single MPC step was observed to be 0.028 seconds. This increase in processing time over previous cases is due to the added computational demands of collision avoidance.

\begin{figure*}[ht]
    \centering
    \begin{subfigure}[b]{0.28\textwidth}
        \centering
        \includegraphics[width=\textwidth]{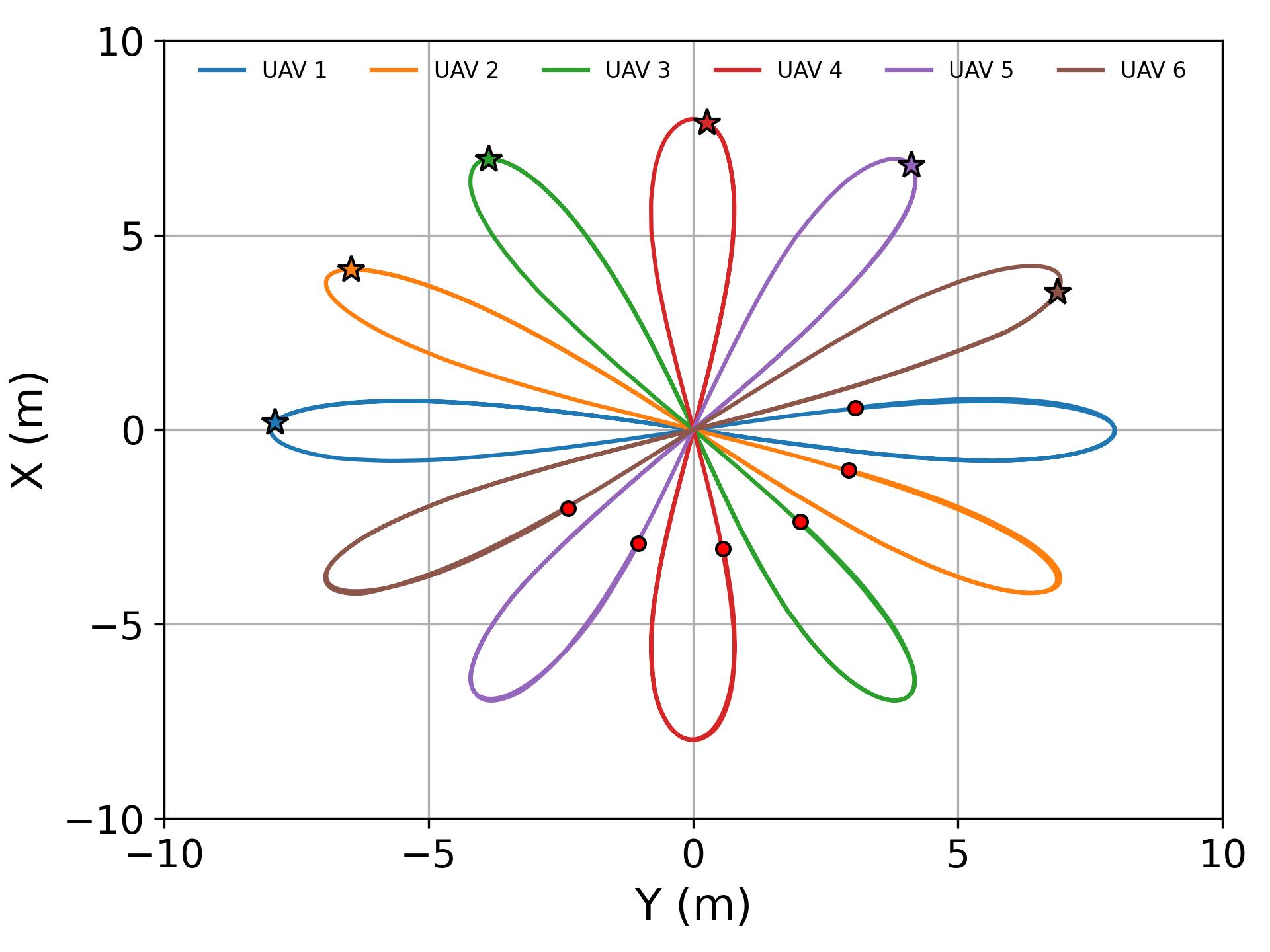}
        \caption{}
        \label{fig:scen4_traj}
    \end{subfigure}
    \hfill
    \begin{subfigure}[b]{0.28\textwidth}
        \centering
        \includegraphics[width=\textwidth]{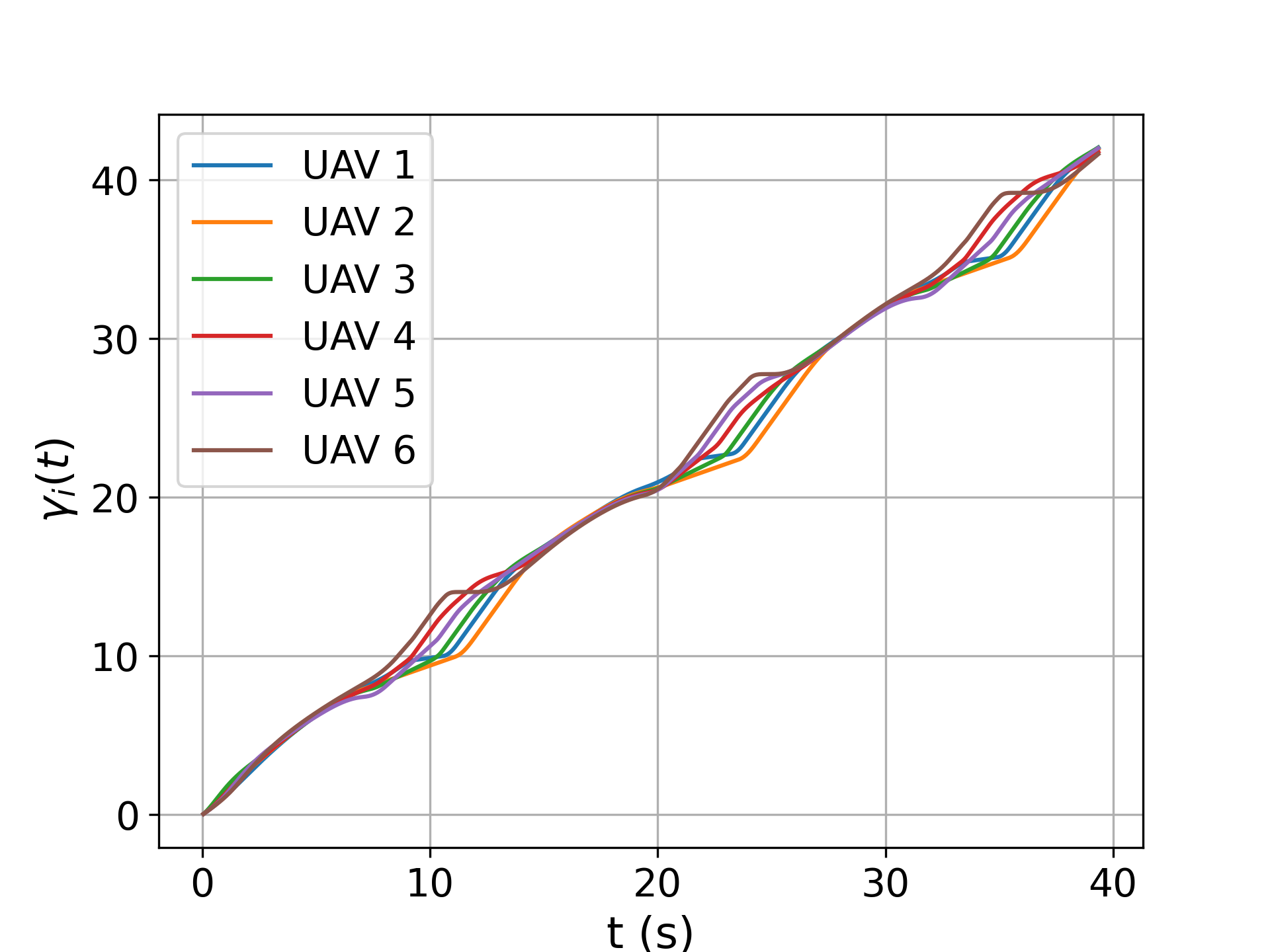}  
        \caption{}
        \label{fig:scen4_gamma}
    \end{subfigure}
    \hfill
    \begin{subfigure}[b]{0.28\textwidth}
        \centering
        \includegraphics[width=\textwidth]{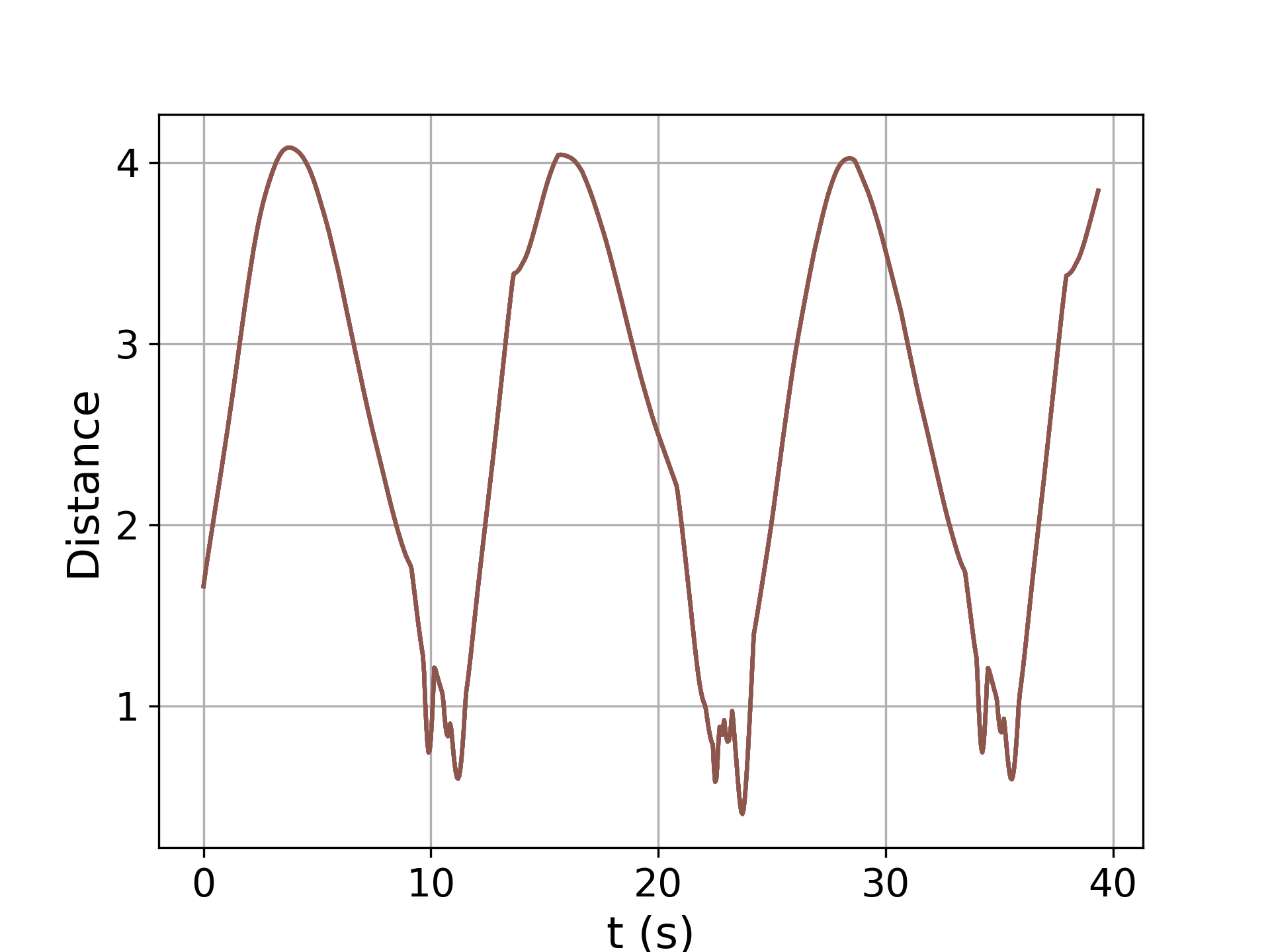}
        \caption{}
        \label{fig:min_distance}
    \end{subfigure}
    \caption{Collision avoidance under non-ideal communication and ideal path following: (a) A top view of the actual trajectories followed by the UAVs, with red circles marking the starting points and stars indicating the final positions; (b) $\gamma_i$ over time; (c) Minimum distance maintained between any two UAVs over time.}
    \label{fig:Scenario4}
\end{figure*}
\subsubsection*{Scalability} The method is inherently distributed and relies only on local information, making it naturally scalable. Simulations confirm practical applicability for systems with several dozen UAVs. To test the scalability of the method, we simulated a scenario in which multiple UAVs fly along 
non-overlapping circular trajectories that share a common center but have different radii. 
We assume ideal communication and path-following conditions in this scenario. 
Under these assumptions, as the number of agents increases, the computation time grows only slightly, 
as shown in Table~\ref{tab:data}. 
This limited growth in computation time is due to the fact that information from other agents enters 
each optimization problem only through fixed coupling terms in the cost function, while the 
dimensionality of the decision variable remains one.

\begin{table}[ht]
\centering
\resizebox{\columnwidth}{!}{ 
\begin{tabular}{|c|c|c|c|}
\hline
\textbf{\# of UAVs} & \textbf{Mean Time}  & \textbf{Max Time} & \textbf{Consensus Time} \\
\hline
10  & 0.005834  & 0.014842 & 4.8 \\
20  & 0.007044  & 0.022113 & 4.2\\
30  & 0.007960 &  0.027910 & 4.1 \\
\hline
\end{tabular}
}
\caption{Mean, Max MPC step calculation times and consensus achievement time in ideal scenario.}
\label{tab:data}
\end{table}
The simulation results show the performance of the proposed coordination algorithm in various scenarios. Under ideal conditions, UAVs achieve synchronization rapidly. Introducing non-ideal communication (including partial communication failures) and wind disturbances reveals the ability of the algorithm to handle real-world challenges, ensuring coordination and maintaining safety through collision avoidance mechanisms. Additionally, the method's inherent scalability, demonstrated through simulations with increasing numbers of UAVs, confirms its practical applicability. For reference, all simulation videos can be found at the following \href{https://www.youtube.com/watch?v=yaI_HdUjJj4&t=77s}{link}. The RotorPy simulation code and additional results are available at \href{https://github.com/amanucha/rotorpy_coordination}{this
 link}, and simulations at the Crazyswarm platform can be found at \href{https://github.com/mikayel2/swarm_timecoord_mpc}{this
 link}.

\section{Conclusion and Future Work}
\label{sec:conclusion}

In this paper, we proposed a novel game-theoretic framework for time-critical cooperative missions of UAV systems operating over time-varying networks. The approach introduces a distributed time-coordination mechanism that ensures agile, scalable, system-wide synchronization in dynamic and uncertain environments. The low dimensionality of the proposed optimization problem enables efficient real-time implementation while accommodating UAV dynamical constraints and mission-specific requirements.
In the idealized setting, we proved the existence and exponential stability of a Nash equilibrium, ensuring synchronized system behavior. To address more realistic scenarios, we developed an MPC-based algorithm capable of handling communication failures and path-following errors. Extensive simulations demonstrated the effectiveness and robustness of the approach.
Although the theoretical analysis has so far been conducted only under ideal conditions, in future work we aim to establish the existence and exponential stability of the Nash equilibrium in more general settings, accounting for time-varying communication networks, potential communication failures, and path-following inaccuracies.
These extensions render the cost function nonconvex, while preserving convexity with respect to higher-order derivatives of the virtual time. Consequently, existence of an equilibrium is expected to remain guaranteed, whereas uniqueness may no longer hold. Nevertheless, based on our simulation results, we postulate that all Nash equilibria exhibit exponential convergence properties.

Furthermore, we intend to analyze the convergence properties of the proposed MPC-based algorithm.
\vspace{-0.42cm}
\section*{Acknowledgments}
The authors thank Professors Ant{\'o}nio Pascoal and Isaac Kaminer for their insightful discussions and valuable advice during this research. The authors also extend their thanks to Astghik Hakobyan and Andy Younes for their help with the simulation work. 
\vspace{-0.42cm}
\section{Appendix: Proofs}\label{appen:A}
In this part, we provide detailed proofs of our main results.

   [Proposition \ref{prop-exp-first}]\begin{proof}
Before proceeding to the proof, we briefly outline the main idea. Each agent solves an individual minimization problem that depends on the virtual times of the other agents. Applying standard perturbation techniques yields the corresponding Euler–Lagrange equations and transversality conditions. Collecting these conditions leads to a coupled system of fourth-order differential equations. Exploiting the fully connected communication structure, we decouple and explicitly solve this system using Ferrari’s method. The resulting explicit solutions, together with the transversality conditions, imply exponential convergence to the coordinated state, while convexity of the cost functions guarantees existence and uniqueness of the Nash equilibrium.

Suppose that  $\bm{\gamma}^*=(\gamma_1^*,\dots,\gamma_N^*)\in \prod_{j=1}^{N}\mathcal{B}^{0,\alpha}_j$ is a solution to  Problem \ref{prob-exp-stab}. Then, by the  definition of Nash equilibrium and the convexity of the integrand of \eqref{def-cost-0} it follows that $\gamma_i^*$ is the unique minimizer of the following optimization problem
       \begin{equation}\label{eq-min-prob-App}
\begin{split}
    I^\alpha_{\bm{\gamma}^*}[\gamma_i^*] &= \min_{\gamma_i \in \mathcal{B}^{0,\alpha}_i}  I^\alpha_{\bm{\gamma}^*}[\gamma_i^*] 
= \min_{\gamma_i \in \mathcal{B}^{0,\alpha}_i} \int_0^\infty e^{-\alpha t} 
( w_1\dot{\gamma}_i^2 \\&+ \tfrac{w_2}{N}\sum (\gamma_j - \gamma_j^*)^2 + w_3\ddot{\gamma}_i^2 ) \, dt.
\end{split}
\end{equation}

To examine $\gamma_i^*$ behavior at infinity, we use the Euler-Lagrange equations of \eqref{eq-min-prob-App}. To derive Euler-Lagrange equations, we consider the following 
 perturbations   
\(u^\varepsilon(t) := \gamma_i^*(t) + \varepsilon v(t)\),
where $v \in H^2((0,\infty)),\quad  v(0)=0,\quad \dot{v}(0)=0$. Because $u^\varepsilon\in\mathcal{B}^{0,\alpha}_i$ and $\gamma_i^*$ is a minimizer to  \eqref{eq-min-prob-App},   the scalar function
$\phi(\varepsilon):=I^\alpha_{\bm{\gamma}^*}[u^\varepsilon]$
has minimum at $\varepsilon=0$. Therefore,
\begin{equation}\label{eq-p-main-EL}
\begin{split}
    \phi^\prime(0)=2 \int_0^\infty e^{-\alpha t} 
(
w_1\dot{\gamma}_i^* \dot{v} 
&+ \tfrac{w_2}{N}\sum  (\gamma_i^* - \gamma_j^*)v 
\\&+w_3\ddot{\gamma}_i^*\ddot{v}  )\dt=0,
\end{split}
    \end{equation}
   for any $v \in H^2((0,\infty)),\quad  v(0)=0,\quad \dot{v}(0)=0$.
    First, taking $v \in H_c^2((0,\infty))$ and applying integration by parts and using fundamental lemma of the calculus of variations
 from \eqref{eq-p-main-EL},  we derive the Euler-Lagrange equation
\begin{equation}\label{eq-5-poly-1-App}
\gamma_i^{(4)*}-2\alpha  \dddot{\gamma}_i^*+(\alpha^2-\tfrac{w_1}{w_3}) \ddot{\gamma}_i^* +  \tfrac{\alpha w_1}{w_3} \dot{\gamma}_i^* +\tfrac{w_2}{w_3N}\sum_{j=1}^{N}  (\gamma_i^* - \gamma_j^*)=0,
\end{equation}
for all $i=1,\dots,N$.
Next, using the Euler--Lagrange equation in~\eqref{eq-5-poly-1-App}, 
and applying integration by parts in~\eqref{eq-p-main-EL}, 
we consider two classes of admissible test functions. 
The first class consists of functions 
$v \in H^2((0,\infty))$ satisfying 
$v(0) = 0$, $\dot{v}(0) = 0$, 
$\lim_{T \to \infty} v(T) \neq 0$, 
and $\lim_{T \to \infty} \dot{v}(T) = 0$. 
The second class consists of test functions 
$v$ such that both $\lim_{T \to \infty} v(T) \neq 0$ and 
$\lim_{T \to \infty} \dot{v}(T) \neq 0$. 
By analyzing these two cases, we deduce the corresponding transversality conditions 
\begin{equation}\label{eq-transver}
\begin{split}
   & \lim_{T\to\infty} e^{-\alpha T}\ddot{\gamma}_i^*(T)=0,\\
  &  \lim_{T\to\infty} e^{-\alpha T}(\tfrac{w_1}{w_3}\dot{\gamma_i}^*(T)+\alpha \ddot{\gamma}_i^*(T) - \dddot{\gamma}_i^*(T))=0.
\end{split}
\end{equation}
To examine  $\gamma_i^*$, we use Euler-Lagrange equation in \eqref{eq-5-poly-1-App} and transversality conditions from \eqref{eq-transver}.

To solve the system of coupled fourth-order differential equations in \eqref{eq-5-poly-1-App}, we note that if we subtract the $k^{\rm th}$   equation from $i^{\rm th}$  one  of the system and denote 
\begin{equation}\label{def-y-App}
y_{ik}(t)=\gamma_i^*-\gamma_k^*,
\end{equation}
then, $y_{ik}$ solves the following  fourth-order linear homogeneous ordinary differential equation
\begin{equation}\label{eq-6-poly-3-App}
y^{(4)}_{ik} - 2\alpha \dddot{y}_{ik} + ( \alpha^2-\tfrac{w_1}{w_3})\ddot{y}_{ik} + \tfrac{\alpha w_1}{w_3} \dot{y}_{ik} + \tfrac{ w_2 }{w_3}  y_{ik} = 0.
\end{equation}
To solve the proceeding differential equation, we consider its characteristic equation
\begin{equation}\label{eq-7-poly-App}
	\lambda^{4} - 2\alpha \lambda^3 + ( \alpha^2-\tfrac{ w_1}{w_3})\lambda^2 + \tfrac{\alpha w_1}{w_3} \lambda + \tfrac{ w_2}{w_3}  = 0.
\end{equation}
We study the fourth-degree polynomial equation by Ferrari's method. By the following change of variable 
\begin{equation*}
\lambda=(x+\tfrac{\alpha}{2})
\end{equation*}
from \eqref{eq-7-poly-App}, we obtain a depressed  quartic, which is actually biquadratic
\begin{equation*}
 x^4 - \left(\tfrac{\alpha^2}{2} +\tfrac{ w_1}{w_3}\right)x^2 + \tfrac{\alpha^4}{16} + \tfrac{\alpha^2}{4}\tfrac{ w_1}{w_3} + \tfrac{ w_2}{w_3}.
\end{equation*}
Solving the proceeding equation, we get
\begin{equation*}
x_{1234} = \pm\sqrt{\tfrac{w_3\alpha^2 + 2w_1 \pm 2\sqrt{w_1^2-4w_2w_3}}{4w_3}},
\end{equation*}
Therefore, if $W=w_1^2-4w_2w_3<0$
\begin{equation}\label{eq-def-la-App}
    \begin{split}
\lambda^-_{12} = \frac{\alpha}{2}\pm\sqrt{\tfrac{w_3\alpha^2 + 2w_1 + 2\sqrt{W}}{4w_3}}=\mu^-_1\pm i\nu_1,\\
\lambda^-_{34} =\tfrac{\alpha}{2} \pm\sqrt{\tfrac{w_3\alpha^2 + 2w_1 - 2\sqrt{W}}{4w_3}}=\mu_2^-\pm i\nu_2,
    \end{split}
\end{equation}
where
\begin{equation}\label{eq-def-mu-App}
	\begin{split}
&\mu^-_{12}= \tfrac{\alpha}{2}\mp\tfrac{1}{2}\sqrt{\sqrt{\left(\tfrac{w_3\alpha^2 + 2w_1}{w_3}\right)^2 - \tfrac{4W}{w^2_3}} + \tfrac{w_3\alpha^2 + 2w_1}{2w_3}},\\
&\nu_1=\nu_2=\tfrac{1}{2}\sqrt{\sqrt{\left(\tfrac{w_3\alpha^2 + 2w_1}{w_3}\right)^2 - \tfrac{4W}{w^2_3}} -\tfrac{w_3\alpha^2 + 2w_1}{2w_3}}.
	\end{split}
\end{equation}
Using the solutions to the characteristic equation in \eqref{eq-7-poly-App}, we derive the general form of the  solutions to \eqref{eq-6-poly-3-App}
\begin{equation}\label{eq-sol-DE-App}
\begin{split}
    y_{ik}(t)&= e^{\mu^-_1 t}(A_{ik}\cos(\nu_1 t)+B_{ik}\sin(\nu_1 t))\\&+e^{\mu^-_2 t}(C_{ik}\cos(\nu_2 t)+D_{ik}\sin(\nu_2 t)).
\end{split}
\end{equation}
    It is important to note that the constants $\mu^-_1,\mu^-_2,\nu_1,\nu_2$ do not depend on $i$ and $k$. Furthermore, 
    \eqref{eq-def-mu-App} implies that for any $\alpha>0$ and $N\geq 2$, we have $\mu^-_1<0$, $\mu^-_2>0$. Using these and the 
     transversality conditions in \eqref{eq-transver}, we deduce that $C_{ik}=D_{ik}=0$ in \eqref{eq-sol-DE-3-App}. On the other hand, recalling \eqref{def-y-App} and $\gamma_i^*\in \mathcal{A}_{i}^{0,\alpha},\quad\gamma_k^*\in\mathcal{A}_{k}^{0,\alpha}$, we get the following boundary values for $y_{ik}$ 
     \begin{equation}
         \label{eq-def-boundary}
         y_{ik}(0)=\gamma_{i}^{0}-\gamma_{k}^{0},\quad \dot{y}_{ik}(0)=0.
     \end{equation}
     Relying on the boundary condition in the previous equation, from \eqref{eq-sol-DE-App} we obtain
     \begin{equation}\label{eq-Aik}
        A_{ik}=\gamma^i_0-\gamma^k_0,\quad
	B_{ik}=-\frac{\mu_{1} A_{ik}}{\nu_{1}}.
     \end{equation}
 Therefore, 
\begin{equation}\label{eq-sol-DE-3-App}
	y_{ik}(t)= e^{\mu^-_1 t}(\gamma^i_0-\gamma^k_0)\Big(\cos(\nu_1 t)-\frac{\mu^-_{1} }{\nu_{1}}\sin(\nu_1 t)\Big).
\end{equation}
    Repeating same the arguments for all $(i,k)$ pairs, we get 
\begin{equation*}
    \sum_{j=1}^{N}  (\gamma_i^* - \gamma_j^*)=e^{\mu^-_1 t}(\cos(\nu_1 t)-\tfrac{\mu^-_{1} }{\nu_{1}}\sin(\nu_1 t))S_i,
\end{equation*}
where  $S_i:=\sum_{j=1}^N (\gamma^i_0 - \gamma^j_0)$.
Now, substituting these    into \eqref{eq-5-poly-1-App}, we obtain a non-homogeneous fourth-order ODE with exponential and trigonometric right-hand side
\begin{equation}\label{eq-5-poly-2-App}
\begin{split}
\gamma_i^{(4)*}&-2\alpha  \dddot{\gamma}_i^*+(\alpha^2-\tfrac{w_1}{w_3}) \ddot{\gamma}_i^* + \tfrac{\alpha w_1}{w_3}  \dot{\gamma}_i^* \\&=-\tfrac{w_2}{w_3}e^{\mu^-_1 t}(\cos(\nu_1 t)-\tfrac{\mu^-_{1} }{\nu_{1}}\sin(\nu_1 t)) S_i
\end{split}
\end{equation}
To obtain the general solution to \eqref{eq-5-poly-2-App},  first, we find a particular solution to it. We search for a particular solution  in the following form
\begin{equation}\label{eq-part-sol}
    \gamma_{i}^{p}(t) = e^{\mu^-_1 t} (C^-_{1i} \cos(\nu_1 t) + C^-_{2i} \sin(\nu_1 t)).
\end{equation}
Equalizing right and left hand side of   \eqref{eq-5-poly-2-App} for $\gamma_{i}^{p}$, we find constants $C^-_{1i}$ and $C^-_{2i}$ in terms of $\mu_1,\nu_1,\gamma^0_i,\gamma^0_{-i}$ 
\begin{equation}\label{def-C}
	\begin{cases}
		 C^-_{1i}	 = - \tfrac{\mu_1w_2(P_i^2+Q_i^2)+Q_i(\nu_1P_i-\mu_1Q_i)}{P_iw_3\mu_1(P_i^2+Q_i^2)}S_i\\
          C^-_{2i}
		= \tfrac{w_2(\nu_1P_i-\mu_1Q_i)}{\mu_1w_3(P_i^2+Q_i^2)}S_i,
	\end{cases}
\end{equation}
where
\begin{equation}\label{eq-notation}
	\begin{split}
	P_i&:=\mu_1^4 - 6\mu_1^2 \nu_1^2 + \nu_1^4 - 2\alpha (\mu_1^3 - 3\mu_1 \nu_1^2) \\&\quad\quad\quad\quad+ (\alpha^2 - \tfrac{w_1}{w_3})(\mu_1^2 - \nu_1^2) + \tfrac{\alpha w_1}{w_3} \mu_1,
	\\Q_i&:= 4 (\mu_1^3 - \mu_1 \nu_1^2) \nu_1 -2\alpha (\mu_1^2 - \nu_1^2) \nu_1 \\&\quad\quad\quad\quad+ 2(\alpha^2-\tfrac{w_1}{w_3}) \mu_1 \nu_1+\tfrac{\alpha w_1}{w_3}\nu_1.
	\end{split}
\end{equation}
Next, we provide the general solution to the homogeneous equation of \eqref{eq-5-poly-2-App}
\begin{equation}\label{eq-5-poly-2-h-App}
\begin{split}
\bar{\gamma}_i^{(4)*}&-2\alpha  \dddot{\bar{\gamma}}_i^*+(\alpha^2-\tfrac{ w_1}{w_3}) \ddot{\bar{\gamma}}_i^* + \tfrac{\alpha w_1}{w_3}  \dot{\bar{\gamma}}_i^*=0.
\end{split}
\end{equation}
Similar to the analysis of  \eqref{eq-6-poly-3-App} for  
 the general  solution of the  homogeneous  equation, we get
\begin{equation}\label{eq-homo-gen}
\bar{\gamma}_i=H^-_{1i}+H^-{2i}e^{\alpha t}+H^-_{3i}e^{\mu_3t }+H^-_{4i}e^{\mu_4 t},
\end{equation}
where
\begin{equation}
    \label{eq-mu-3-4}
\begin{split}
\mu_{3}=\tfrac{\alpha}{2}-\sqrt{\tfrac{\alpha^2}{4} +\tfrac{w_1}{w_3}},\quad \mu_{4}=\tfrac{\alpha}{2}+\sqrt{\tfrac{\alpha^2}{4}+\tfrac{w_1}{w_3}}.
\end{split}
\end{equation}
Combining  the particular solution in \eqref{eq-part-sol} with the general solution to the homogeneous equation in \eqref{eq-homo-gen}, we obtain
\begin{equation}\label{eq-sol-General-App}
\begin{split}
\gamma_i^*(t) = H^-_{1i}+H^-_{2i} e^{\alpha t}+H^-_{3i}e^{\mu_3 t}+H^-_{4i}e^{\mu_4 t}\\+e^{\mu^-_1 t} (C^-_{1i} \cos(\nu_1 t) + C^-_{2i} \sin(\nu_1 t)).
\end{split}
\end{equation}
Because $\mu_3<0$ and $\mu_4>0$ from the  
transversality conditions in \eqref{eq-transver}, we deduce that $H^-_{2i}=H^-_{4i}=0$. On the other hand, $\gamma_i^*$ in \eqref{eq-sol-General-App} should satisfy boundary conditions in $\mathcal{B}^{0}_i$.  Therefore,  
\begin{equation}\label{eq-sol-General-boundary-1}
	H^-_{1i}=\gamma_i^0-H^-_{3i}- C^-_{1i},\quad
		 H^-_{3i} =-\tfrac{\mu^-_1 C^-_{1i}+\nu_1 C^-_{2i}}{\mu_3}.
         \end{equation}
This last step proves \eqref{eq-explicit-solution}. In the case $W = w_1^2 - 4N w_2 w_3 > 0$, 
the solutions to the characteristic equation in~\eqref{eq-7-poly-App} are real. 
By arguing as in the previous case, we obtain
\begin{equation}\label{eq-sol-General-App-2}
\begin{split}
\gamma_i^*(t) = H^+_{1i}+H^+_{3i}e^{\mu_3 t}+C^+_{1i} e^{\mu^+_1 t}  + C^+_{2i} e^{\mu^+_2 t},
\end{split}
\end{equation}
where
\begin{equation}\label{eq-sol-General-boundary-1}
	\begin{split}
	C^+_{1i}= \tfrac{1}{A^+}\tfrac{\mu^+_2}{\mu^+_2-\mu^+_1}S_i,
        \,
        C^+_{2i} =-\tfrac{1}{B^+}\tfrac{\mu^+_1}{\mu^+_2-\mu^+_1}S_i,\\
    H^+_{1i}=\gamma_i^0-H^+_{3i}- C^+_{1i}-C^+_{2i},\,
		 H^+_{3i} =-\tfrac{(\mu^+_1C^+_{1i}+\mu^+_2 C^+_{2i})}{\mu_3}
	\end{split}
\end{equation}
with $A^+=(\mu^+_1)^4-2\alpha (\mu^+_1)^3+(\alpha^2-\frac{w_1}{w_3})(\mu^+_1)^2+\frac{\alpha w_1}{w_3}\mu^+_1$, $B^+=(\mu^+_2)^4-2\alpha (\mu^+_2)^3+(\alpha^2-\frac{w_1}{w_3})(\mu^+_2)^2+\frac{\alpha w_1}{w_3}\mu^+_2$
and 
\begin{equation}\label{eq-def-la-App-2}
    \begin{split}
\mu^+_{12} = \tfrac{\alpha}{2}-\sqrt{\tfrac{w_3\alpha^2 + 2w_1 \pm 2\sqrt{W}}{4w_3}}.
    \end{split}
\end{equation}

In the third case of $W=0$, with the similar arguments,  we get
\begin{equation}\label{eq-sol-General-App-3}
\begin{split}
\gamma_i^*(t) = H^0_{1i}+H^0_{3i}e^{\mu_3 t}+C^0_{1i} e^{\mu^0 t}  + C^0_{2i}t e^{\mu^0 t},
\end{split}
\end{equation}
where
\begin{equation}\label{eq-sol-General-boundary-3}
	\begin{split}
    C^0_{1i}=-\tfrac{C^0_{2i}B^0+S_i}{A^0},
        \,
        C^0_{2i} =\frac{\mu^0S_i}{A^0},\\
H^0_{1i}=\gamma_i^0-H^0_{3i}- C^0_{1i},\,
		 H^0_{3i} =-\tfrac{(\mu^0 C^0_{1i}+ C^0_{2i})}{\mu_3}
	\end{split}
\end{equation}
with $A^0=(\mu^0)^4-2\alpha (\mu^0)^3+(\alpha^2-\frac{w_1}{w_3})(\mu^0)^2+\frac{\alpha w_1}{w_3}\mu^0$, $B^0=4(\mu^0)^3-6\alpha (\mu^0)^2+2(\alpha^2-\frac{w_1}{w_3})\mu^0+\frac{\alpha w_1}{w_3}$ and
\begin{equation}\label{eq-def-la-App-3}
    \begin{split}
\mu^0 = \tfrac{\alpha}{2}-\sqrt{\tfrac{w_3\alpha^2 + 2w_1}{4w_3}}.
    \end{split}
\end{equation}

To prove the existence of a solution to Problem \ref{prob-exp-stab}, we use backwards arguments. In particular, since $\gamma_i^*$ (given by~\eqref{eq-sol-General-App}, 
or~\eqref{eq-sol-General-App-2}, 
or~\eqref{eq-sol-General-App-3}, 
depending on the value of the parameter $W$) solves the Euler-Lagrange equation in \eqref{eq-5-poly-1-App} and the variational problem in \eqref{eq-min-prob-App} is convex, we have that  $\gamma_i^*$ is the minimizer of \eqref{eq-min-prob-App}. Subsequently, $\bm{\gamma}^*$ is a Nash equilibrium of Problem \ref{prob-exp-stab}.
\end{proof}

\vspace{-0.5cm}
\bibliographystyle{IEEEtran}
\bibliography{references}

\end{document}